\documentclass{article}
\usepackage[]{geometry}
\usepackage{natbib}
\usepackage{algorithmic}
\usepackage{algorithm}

\usepackage{microtype}
\usepackage{graphicx}
\usepackage{subfigure}
\usepackage{booktabs} %

\usepackage{hyperref}

\usepackage{amsmath}
\usepackage{amssymb}
\usepackage{mathtools}
\usepackage{amsthm}

\usepackage[capitalize,noabbrev]{cleveref}

\usepackage[textsize=tiny]{todonotes}

\usepackage{graphicx}
\newtheorem{definition}{Definition}
\newtheorem{theorem}{Theorem}
\newtheorem{example}{Example}
\newtheorem{lemma}{Lemma}
\usepackage{makecell}
\usepackage{wrapfig}
\usepackage{tabularx}
\usepackage{booktabs}

\DeclareMathOperator*{\argmax}{arg\,max}

\usepackage{color}

\definecolor{english}{rgb}{0.0, 0.0, 0.0}
\definecolor{mgadd}{rgb}{0.0, 0.0, 0.0}

\newcommand{\MDP}{\mathcal M}

\title{Oracles \& Followers: Stackelberg Equilibria in Deep Multi-Agent Reinforcement Learning}
\usepackage{authblk}
\author[1,2]{Matthias Gerstgrasser \thanks{matthias@seas.harvard.edu}}
\author[1,3]{David C. Parkes \thanks{parkes@seas.harvard.edu}}
\affil[1]{John A. Paulson School of Engineering and Applied Science, Harvard University, Cambridge, MA, USA}
\affil[2]{Computer Science Department, Stanford University, Stanford, CA, USA}
\affil[3]{Deepmind, London, UK}

\begin{document}

\maketitle

\begin{abstract}
Stackelberg  equilibria  arise naturally in a range of popular learning problems, such as in security games or indirect mechanism design, and have received increasing attention in the reinforcement learning literature. We present a general framework for implementing Stackelberg equilibria search as a multi-agent RL problem, allowing a wide range of algorithmic design choices. We discuss how previous approaches can be seen as specific instantiations of this framework. As a key insight, we note that the design space allows for approaches not previously seen in the literature, for instance by leveraging multitask and meta-RL techniques for follower convergence. We propose one such approach using contextual policies, and evaluate it experimentally on both standard and novel benchmark domains, showing greatly improved sample efficiency compared to previous approaches. Finally, we explore the effect of adopting algorithm designs outside the borders of our framework.
\end{abstract}

\section{Introduction}

Stackelberg equilibria are an important concept in economics, and in recent years have received increasing attention in computer science and specifically in the multiagent learning community. 
They model an asymmetric setting: a leader  who commits to a strategy, and one or more followers who respond. The leader aims to maximize their reward, knowing that followers in turn will best-respond to the leader's choice of strategy. These equilibria appear in a wide range of settings. In {\em security games}, a defender wishes to choose an optimal strategy considering attackers will adapt to it \citep{an2017stackelberg,sinha2018stackelberg}. In {\em mechanism design}, a designer wants to allocate resources in an efficient manner, knowing that participants may strategize \citep{nisan1999algorithmic,swamy2007effectiveness,brero2021learning}. More broadly, many multi-agent system design problems can be viewed as Stackelberg equilibrium problems: we as the designer take on the role of the Stackelberg leader, wishing to design a system that is robust to agent behavior.

In this paper, we are particularly interested in Stackelberg equilibria in sequential decision making settings, i.e., {\em stochastic Markov games}, and using multi-agent reinforcement learning  to learn these equilibria. We make two key contributions:
\begin{enumerate}
    \item We introduce a new theoretical framework for framing Stackelberg equilibria as a multi-agent reinforcement learning problem.
    \begin{enumerate}
        \item We give a theorem (Theorem~\ref{thm:main}) that unifies
        {and generalizes} several prior algorithmic design choices in the literature. 
        \item Complementing this, we show that algorithm designs outside our framework can provably fail {(Theorem~\ref{thm:divergence})}.
        In particular, we give to our knowledge the first demonstration
        that reinforcement learning (RL) against best-responding opponents can provably diverge.%
    \end{enumerate}
    \item  Inspired by Theorem~\ref{thm:main}, we introduce a novel approach to accelerating follower best-response convergence, borrowing ideas from multitask and meta-RL, including an experimental evaluation on existing and new benchmark domains.
\end{enumerate}

Our theoretical framework (Section~\ref{sec:framework}) allows a black-box reduction from learning Stackelberg equilibria into separate leader and follower learning problems. This theory encompasses and generalizes several prior approaches from the literature, in particular \citet{brero2021learning}, and gives a large design space beyond what has been explored previously.
{In particular, our broader framework generalizes beyond learning Stackelberg equilibria along with follower learning dynamics to general, query-based follower algorithms.}
The theory  is complemented by necessary conditions, with a demonstration
of impossibility  for successful Stackelberg learning through RL {by the leader when used together with  followers that immediately best respond to the leader policy (Theorem~\ref{thm:divergence})}. 
Our practical meta-RL approach uses contextual policies, a common tool in multitask and meta-RL \citep{wang2016learning}, to the follower learning problem. This allows followers to generalize and quickly adapt to leader policies,
as demonstrated on existing benchmark domains, with  greatly improved speed of convergence compared to previous
approaches. Beyond this, 
we use this meta-RL approach to
scale up Stackelberg learning 
 beyond what has previously been shown,
to a state-of-the-art RL benchmark domain built on Atari 2600.

In the remainder of the paper, we will introduce Stackelberg equilibria and Markov games in Section~\ref{sec:prelim}. In Section~\ref{sec:framework}, we motivate and define our framework for learning Stackelberg equilibria using multi-agent RL, and discuss its scope and limitations. {We show sufficient conditions in Lemma~\ref{thm:lemma} and Theorem~\ref{thm:main}. In Theorem~\ref{thm:divergence} we show that RL against best-responding opponents can fail, which implies that the construction in Theorem~\ref{thm:main} is also necessary {for RL leaders (Appendix~\ref{sec:ap:limits:pomdp} discusses a case of non-RL leaders that does not require Theorem~\ref{thm:main}).} In Appendix~\ref{sec:ap:memory} and Appendix~\ref{sec:ap:limitations} we discuss further ablations and edge cases of our framework.} We define our novel Meta-RL approach in Section~\ref{sec:contextual}, and empirically evaluate it in Section~\ref{sec:experiments} on existing and new benchmark domains.

\subsection{Prior Work}

\paragraph{Learning Stackelberg Equilibria.}
Most prior work on Stackelberg equilibria focuses on single-shot settings such as normal-form games, a significantly simpler setting than Markov games. {Some of this work studies solving an optimization problem to 
find a Stackelberg equilibria, given an explicit description of the problem~\cite{paruchuri2008playing, xu2014computing,blum2014learning,li2022solving}.}
Among the first work to learn a Stackelberg equilibria was~\citet{letchford2009learning}, who focus on
{single-shot}
Bayesian games. 
\citet{peng2019learning} also give results for  sample access to {the payoffs of} 
matrix games.
\citet{wang2022coordinating} give an approach that differentiates 
through optimality (KKT) conditions, again for normal-form games.
\citet{bai2021sample} give lower and upper bounds on learning Stackelberg equilibria in general-sum games, including so-called ``bandit RL'' games that have one step for the leader
and sequential decision-making for the followers.
Few works consider Markov games: \citet{zhong2021can} give algorithms that find Stackelberg equilibria in Markov games, but assume myopic followers, a significant limitation compared to the general case.  \citet{brero2021learning, brero2021reinforcement, brero2022learning} use an inner-outer loop approach, which they call the {\em Stackelberg POMDP}, {which can be seen as a special case of our framework.}
\paragraph{Mechanism Design.}
One of the first works specifically discussing Stackelberg equilibria in a learning context is \citet{swamy2007effectiveness}, who design interventions in traffic patterns. More recently, several strands of work have focused on using multi-agent RL for economic design, often framing this is a bi-level or inner-outer-loop optimization problem. %
\citet{zheng2022ai} use a bi-level RL approach to design optimal tax policies, and
\citet{yang2022adaptive} use  meta-gradients in a specific incentive design setting. %
\citet{shu2018m} and \citet{shi2019learning} learn leader policies in a
type of incentive-shaping setting, using
a form of modeling other agents 
coupled with rule-based followers. \citet{balaguer2022good} use an inner-loop outer-loop, gradient descent approach for mechanism design on iterated matrix games (which we also use as an experimental testbed). They mainly focus on the case where both the environment transition as well as the follower learning behavior is differentiable, and otherwise fall back to evolutionary strategies for the leader.

\paragraph{Opponent Shaping.}
More broadly related is also a line of work on \textit{opponent shaping}, such as M-FOS~\citep{lu2022model} or LOLA~\citep{foerster2017learning}. While these works also learn while taking into account other agents' responses, they intentionally only let the other agents learn ``a little bit'' between the opponent-shaping agent's learning steps, as opposed to letting them learn until they best-respond. This reflects a difference in goals: Opponent-shaping aims to predict and exploit other agents' learning behaviors, whereas Stackelberg work aims to learn strategies that are robust even when other agents are able to learn and adapt as much as they like.

\section{Preliminaries}
\label{sec:prelim}

\paragraph{Markov games.} We consider partially observable stochastic Markov games, which are 
a multi-agent generalization of a partially observable Markov Decision Process (POMDP).
\begin{definition}[Markov Game] \label{def:markov}
A {\em Markov Game}, $\mathcal M$,
with $n$ agents is a tuple $(S,A,T,r,\Omega,O,\gamma)$, consisting of a state space $S$, an action space $A = (A_1,...,A_n)$, a (stochastic) transition function $T: S \times A \rightarrow S$, a (stochastic) reward function $r: S \times A \rightarrow \mathbb R^n$, an observation space $\Omega = (\Omega_1,...,\Omega_n)$, a (stochastic) observation function $O: S \times A \rightarrow \Omega$, and a discount factor $\gamma$.
\end{definition}

At each step $t$ of the game, every agent $i$ chooses an action $a_{i,t}$ from their action space $A_i$, the game state evolves according to the joint action $(a_{1,t},\ldots,a_{n,t})$ and the transition function $T$, and agents receive observations and reward according to $O$ and $R$. An agent's behavior in the game is characterized by its policy $\pi_i: o_i \mapsto a_i$, which maps observations to actions.\footnote{To keep notation concise we discuss here the memory-less case, but all our results generalize to a stateful leader policy in a straightforward manner, as we discuss in Appendix~\ref{sec:ap:memory}.}
Each agent in a Markov Game individually seeks to maximize its own (discounted) total return $\sum_t \gamma^t r_i(s_t, a_{i,t}, a_{-i,t})$. This gives rise to the usual definitions of Nash equilibria (NE), correlated equilibria (CE), and coarse correlated equilibria (CCE), which we do not repeat in full here, as well as their Bayesian counterparts. %
Note that strategies in Markov games and in each of these equilibrium definitions are \textit{policies}, not actions: A pair of policies $\pi_1, \pi_2$ in a two-player Markov game is a Nash equilibrium if neither agent can increase their expected total reward by unilaterally deviating to a different policy $\pi_{i}^{'}$.

\paragraph{Stackelberg Equilibria.}
Unlike  the above equilibrium concepts, a Stackelberg equilibrium is not symmetric: There is a special player, the {\em leader}, who commits to their strategy first; the other player (the {\em follower}) then chooses their best strategy given the leader's choice of strategy. This makes the leader potentially more powerful. 

\begin{wraptable}{r}{0.2\textwidth}
\caption{``Battle of \\the Sexes'' game.}
\label{matrix-bots}
\begin{center}
\begin{tabular}{|l|l|}
\hline
2,1 & 0,0 \\
\hline
0,0 & 1,2 \\
\hline
\end{tabular}
\end{center}
\end{wraptable}
\begin{example}
In a game often called the ``battle of the sexes," you and I wish to have dinner together, but you prefer restaurant A (deriving happiness $2$, but I only get happiness $1$), and I prefer restaurant B (I get happiness $2$, you get happiness $1$)---but we would both rather eat together at our less-preferred venue, than to eat separately (we both get happiness $0$). Table~\ref{matrix-bots} shows the payoff matrix of this game. There are two pure Nash equilibria in this game: We both go to restaurant A, or we both go to restaurant B. But there is only a single Stackelberg equilibrium (per leader): If you commit to going to restaurant A, then my only best response is to also go to restaurant A. In doing so I receive happiness $1$, whereas my only alternative would be to eat alone at restaurant B for happiness $0$. Notice that this hinges on the leader strictly committing to their choice of restaurant.
\end{example}

This Stackelberg concept  also extends to Markov games: Here  a leader agent $L$ decides on their policy (i.e. strategy), and the remaining (follower) agents---knowing the leader's choice of policy---best-respond. The leader seeks to maximize their own reward, considering that followers will  best-respond. For instance, in an Iterated Prisoners' Dilemma~\citep{robinson2005topology}, a leader might commit to a Tit-for-Tat strategy, 
in turn leading the follower to cooperate. 

Typically, a Stackelberg equilibrium is formally defined using a max-min-style condition: The leader maximizes its own reward knowing that the follower best-responds, i.e., maximizes its own reward, with the leader-follower dynamic giving the order of the two nested max-operators. This in turn suggests a nested outer-inner loop (reinforcement) learning approach, where the follower trains until convergence every time the leader updates its policy. As a key innovation in this work, we instead use a statement of the follower best-response through an oracle abstraction. An oracle definition has been used before in order to extend Stackelberg equilibria to multiple followers \cite{nakamura2015one, zhang2016multi, liu1998stackelberg, solis2016modeling, sinha2014finding, wang2022coordinating, brero2021learning}. 
In contrast, we use the oracle abstraction to greatly simplify the statement and proof of our main theorem, while simultaneously generalizing it beyond prior approaches. In turn, this allows us to develop a novel Meta-RL approach in the second part of the paper.

With multiple followers, any choice of leader strategy, $\pi_L$, induces a Markov game, $\mathcal F_{\pi_L}$, between the followers, which could feature multiple equilibria as well as
equilibria of different types, such as Nash, correlated, and coarse correlated equilibria, each giving rise to a corresponding Stackelberg-Nash, Stackelberg-CE, and Stackelberg-CCE concept. 
This motivates an oracle abstraction. {For any choice of leader strategy, we denote
as $\mathcal E(\mathcal F_{\pi_L})$ a follower equilibrium (or a probability distribution over equilibria), where we refer to $\mathcal E$ as an  {\em oracle}.
This oracle will later be realized  
 as an algorithm.}
See also~\citet{wang2022coordinating}.
\begin{definition}[Stackelberg equilibrium] \label{def:stackelberg}
Given a Markov Game $\mathcal M$ and a follower best-response oracle $\mathcal E$, a leader strategy $\pi_L$ together with a tuple of follower strategies $\mathbf \pi_F$ is a {\em Stackelberg equilibrium}, if {and only if}
$\pi_L$ maximizes the leader's expected reward under the condition that follower strategies are drawn from $\mathcal E(\mathcal F_{\pi_L})$:
\begin{align*}
    \pi_L \in \argmax_{\pi_L} \mathop{\mathbb{E}}_{\mathbf \pi_F \sim \mathcal E(\mathcal F_{\pi_L})} \Big[ \sum_t \mathop{\mathbb{E}}[ r_L\big( s_t, a_{L,t}, a_{F,t} \big) ] \Big],
\end{align*}
where the second expectation is drawing actions and state transitions from their respective policies $\pi_L, \pi_F$ and transition function $T$, and the reward function is $r$, all as in Definition~\ref{def:markov}.
If the follower oracle $\mathcal E$ gives a Nash equilibrium in the induced game $F_{\pi_L}$, we call this a {\em Stackelberg-Nash equilibrium}, and similarly for CE and CCE.
\end{definition}

{In the remainder of the paper, when we say ``oracle" we mean an algorithm
that  computes or learns the follower best-response equilibrium, $\mathcal E(\mathcal F_{\pi_L})$, given the leader strategy  $\pi_L$.}
In  full generality, an oracle algorithm could take many forms, including RL, optimization, or any other algorithm that takes as input $\pi_L$ and outputs $\mathcal E(\mathcal F_{\pi_L})$. {These algorithms may vary 
in regard to how they access information about $\pi_L$.
Our main positive result (Theorem~\ref{thm:main}) will rely on oracle
 algorithms that only require 
{\em query access} (also called sample access) to the leader policy; i.e.,
algorithms that only  interact with the leader policy by receiving as input the leader's actions in  each of a set of  leader observation states, these states queried by the algorithm in some order.}
\begin{definition}[Query Oracle] \label{def:queryoracle}
    An algorithm implementing an oracle $\mathcal E$ is called a {\em query oracle}
    if its interactions with the leader policy $\pi_L$ are exclusively through 
 queries, where a 
query is an input to the leader policy; i.e., 
an observation $o$ together with the response from the leader policy, 
which is the action $\pi_L(o)$ the leader would take given $o$.
\end{definition}

{A query oracle can also receive additional information, 
for example an 
interaction with the environment that is associated with the Markov Game (or 
a  description of the environment). {The definition of} a {\em query} oracle is only
in regard to how the oracle algorithm interacts with  the leader policy.}
In particular,  any oracle implemented using RL or {another typical}
learning approach
is a query oracle, as learning algorithms generally only require query access {to their environment and the actions taken by other agents in different states.}
{In particular, a suitably formulated RL leader 
together with a RL-based {followers (i.e., an ``outer loop - inner loop'' approach)} can 
be interpreted within our framework as
 an oracle querying the leader policy.} %

An algorithm that operates directly on a description of the leader policy, $\pi_L$, 
such as a parametrization $\theta$ of $\pi_L$, {is not a query oracle}.
 For instance, if $\MDP$ has a small number of discrete states and actions, $\pi_L$ could be directly parametrized,
with $\theta_{ij}$ denoting the probability of taking action $j$ in state $i$. An optimization approach could then compute a best response directly from knowledge of $\theta$, but
these parameters are not available through query access. 
Similarly, $\theta$ could be the weights of a neural network.

\section{A General Framework for Stackelberg Equilibria in Multi-Agent RL}
\label{sec:framework}

Several  approaches have  been proposed to learning Stackelberg equilibria in Markov games, or to use multi-agent RL for mechanism design in such settings. A main aim of our work is to elucidate commonalities between these approaches, and to delineate what is required to guarantee Stackelberg equilibria. For instance, most of the existing approaches use (reinforcement or no-regret) learning to implement the follower best-response, effectively arriving at an {\em ``inner-loop-outer-loop system"}: The leader performs one update to their policy, then the followers perform many updates to theirs until they converge to a best response, then this repeats. Is this the only possible approach? 
Can you mix-and-match leader and follower approaches at will? One approach for leader learning is reinforcement learning, where the gradient of the leader policy is estimated from sampled trajectories \citep{brero2021learning}---this is in contrast to global approaches such as direct differentiation of the leader policy in a world where everything is differentiable  or evolutionary strategies \citep{balaguer2022good}, which modify the leader policy as a whole based on total episode reward, without looking at what happens at each step. Some leader RL 
approaches \citep{brero2021learning,brero2022learning} incorporate the followers' learning dynamics into the leader's view of the environment. Is it necessary that the leader can see this adaption process? Or could we also have the follower best-respond to the leader immediately on the first step it takes? Some approaches for mechanism design do not explicitly mention Stackelberg equilibria \citep{zheng2022ai,balaguer2022good}, but seem very similar to approaches that do; do those approaches converge to a Stackelberg equilibrium? 
In this section we develop a common framework that answers these questions,
and provide a common language to categorize the various strands of research in this area.

A key novelty in our framework is that following Definition~\ref{def:stackelberg} we separate the problem into a leader learning problem and a follower oracle implementation. Let $\mathcal L = \mathcal L_\MDP$ be the learning problem that the leader faces. In standard multi-agent RL, $\mathcal L$ would simply be the leader's local view of the multi-agent system, considered as a single-agent learning problem (i.e., taking the other agents as being part of the environment). We will now show how to construct $\mathcal L$ so as to guarantee a Stackelberg equilibrium. As a warm-up, we will derive two basic conditions on $\mathcal L$ that guarantee that an optimal policy in $\mathcal L$ forms a Stackelberg equilibrium.
\begin{lemma} \label{thm:lemma}
Given a Markov Game $\MDP$ and a follower equilibrium oracle $\mathcal E$, let $\mathcal L_\MDP$ be the learning problem the leader faces. If:
\begin{enumerate}
    \item for each choice of leader policy $\pi_L$, $\mathcal L$ computes the follower best-response $\mathcal E(\pi_L)$, and
    \item $\mathcal L(\pi_L)$ evaluates the leader policy $\pi_L$ against the follower best-response $\mathcal E(\pi_L)$ in $\MDP$, i.e. the value of $\mathcal L(\pi_L)$ is $r_{L}(\pi_L, \mathcal E(\pi_L))$ in $\MDP$,
\end{enumerate}
then an optimal solution $\pi_L^*$ to $\mathcal L$ together with the follower best-response $\mathcal E(\pi_L^*)$ form a Stackelberg equilibrium in $\MDP$.
\end{lemma}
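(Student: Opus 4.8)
The plan is to prove the lemma by directly unwinding Definition~\ref{def:stackelberg} and matching its maximand against the two hypotheses on $\mathcal L$. First I would fix notation for the Stackelberg objective: write
\[
J(\pi_L) \;:=\; \mathop{\mathbb{E}}_{\mathbf \pi_F \sim \mathcal E(\mathcal F_{\pi_L})} \Big[ \sum_t \mathop{\mathbb{E}}[\, r_L(s_t, a_{L,t}, a_{F,t}) \,] \Big],
\]
which is exactly the quantity Definition~\ref{def:stackelberg} asks $\pi_L$ to maximize, and which is precisely what the lemma abbreviates as ``the value $r_L(\pi_L, \mathcal E(\pi_L))$ in $\MDP$''. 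With this notation, a leader policy $\pi_L$ (paired with followers drawn from $\mathcal E(\mathcal F_{\pi_L})$) is a Stackelberg equilibrium \emph{if and only if} $\pi_L \in \argmax_{\pi_L} J(\pi_L)$; so the entire task reduces to showing that an optimal solution of $\mathcal L$ lies in this $\argmax$ set and that the accompanying follower tuple is well defined.

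Next I would invoke the two hypotheses in turn. Hypothesis~2 states that the value of the leader's learning problem at any policy, $\mathcal L(\pi_L)$, equals $r_L(\pi_L, \mathcal E(\pi_L))$ in $\MDP$, i.e.\ equals $J(\pi_L)$; hence the objective optimized by $\mathcal L$ coincides pointwise with $J$. Consequently the set of optimal policies for $\mathcal L$ --- those maximizing the value of $\mathcal L$ --- is exactly $\argmax_{\pi_L} J(\pi_L)$, so any optimal $\pi_L^*$ of $\mathcal L$ satisfies $\pi_L^* \in \argmax_{\pi_L} J(\pi_L)$. Hypothesis~1 then guarantees that $\mathcal E(\pi_L^*)$ (equivalently $\mathcal E(\mathcal F_{\pi_L^*})$) is well defined --- indeed it is precisely what $\mathcal L$ computes internally when evaluating $\pi_L^*$ --- so the follower component of the claimed equilibrium exists. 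Combining these with the reformulation of Definition~\ref{def:stackelberg} above, the pair $(\pi_L^*, \mathbf \pi_F)$ with $\mathbf \pi_F \sim \mathcal E(\mathcal F_{\pi_L^*})$ satisfies the defining condition of a Stackelberg equilibrium in $\MDP$, which is the claim.

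I expect the only real subtlety --- and hence the ``main obstacle'', such as it is --- to be making the identification in Hypothesis~2 fully precise: one has to pin down that ``the value of $\mathcal L(\pi_L)$'' denotes the discounted expected leader return from running $\pi_L$ against $\mathcal E(\mathcal F_{\pi_L})$ in $\MDP$ with exactly the nested-expectation structure of Definition~\ref{def:stackelberg}, and that ``an optimal solution to $\mathcal L$'' means a policy maximizing this value. Once these conventions are fixed the lemma is essentially a restatement of the definition: its content is not a deep argument but the clean decomposition of the Stackelberg condition into a ``follower oracle'' part (Hypothesis~1) and a ``leader objective'' part (Hypothesis~2). The substantive work --- exhibiting such an $\mathcal L$ explicitly, and in particular realizing it as a single-agent learning problem that interacts with $\pi_L$ only through query access --- is what is deferred to Theorem~\ref{thm:main}.
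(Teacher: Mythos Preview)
Your proposal is correct and follows essentially the same argument as the paper: identify the value of $\mathcal L$ with the Stackelberg objective via Hypothesis~2, conclude that an optimizer of $\mathcal L$ lies in the required $\argmax$, and note that Hypothesis~1 is needed only to ensure the follower best-response is actually produced. The paper's proof is terser but makes the identical moves in the same order; your additional care about notation and the remark on what ``value of $\mathcal L$'' must mean are reasonable elaborations rather than a different route.
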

This follows in a straightforward way from Definition~\ref{def:stackelberg} (see Appendix~\ref{sec:ap:proof} for a formal proof).
It is also easy to see that these are essentially necessary conditions (modulo transformations and reward shaping, see Appendix~\ref{sec:ap:necessary}).
While  Lemma~\ref{thm:lemma} is simple, it already allows us to answer some of the questions we posed initially. In particular, do previous not-explicitly-Stackelberg approaches from the literature give Stackelberg equilibria? They are not guaranteed to, because many of them either do not train followers until convergence (e.g. \citep{zheng2022ai}, violating condition 1), or give reward even while followers are still learning (e.g. \citep{balaguer2022good}, violating condition 2).

However, Lemma~\ref{thm:lemma} only applies if we already have an optimal solution to $\mathcal L$, and
it does not say anything about whether a particular leader learning algorithm, such as RL, will converge on $\mathcal L$. 
In particular, we asked earlier if it is necessary for the leader to be able to see the followers' adaption process, or if we could also have followers best-respond immediately?
We  first show
that ``letting the leader see'' the followers' adaption process (formally, the operation of the oracle $\mathcal E$) makes $\mathcal L$ a POMDP, meaning that RL algorithms {for the leader} can be expected to converge under standard assumptions. 
Through the oracle abstraction, we   show this for a much broader class of follower best-response algorithms than typical learning approaches.
Second, we  show  this is also a necessary condition, i.e., 
 leader RL can fail if followers best-respond immediately
and  without a visible adaption process. {In the following theorem, we show that for \textit{any} query oracle, a suitable construction of the leader problem $\mathcal L$ will be a POMDP, i.e. exhibit Markovian state transitions.} {This POMDP  can,  in turn, be
 solved via leader RL to yield a Stackelberg equilibrium, which makes the
 theorem  applicable to settings where
it is desirable to use RL  to solve  the leader learning
 problem.}
\begin{theorem}\label{thm:main}
Given a Markov Game $\MDP$,
and a follower equilibrium oracle $\mathcal E$, if in addition to the conditions of Lemma~\ref{thm:lemma}, the follower oracle $\mathcal E$ is a \textit{query oracle} (Definition~\ref{def:queryoracle}), then the leader learning problem $\mathcal L$ can be constructed as a POMDP.
\end{theorem}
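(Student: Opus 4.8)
The plan is to build $\mathcal L$ as a \emph{two-phase POMDP}: a single episode first simulates the run of the oracle $\mathcal E$ and then simulates an episode of $\MDP$ against the follower equilibrium that $\mathcal E$ produced. The key design choice is that the agent in $\mathcal L$ will have observation set $\Omega_L$ and action set $A_L$, with the phase \emph{hidden}, so that any policy for $\mathcal L$ is literally a leader policy $\pi_L$ for $\MDP$ and is used consistently across both phases. The hidden state of $\mathcal L$ carries, during the first phase, the internal configuration of the oracle algorithm $\mathcal E$ (its memory plus ``program counter''), and during the second phase, the sampled follower policies $\mathbf \pi_F$ together with the current state of $\MDP$. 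Given Lemma~\ref{thm:lemma}, it then suffices to check (i) that this object is a POMDP in the sense of Definition~\ref{def:markov} (the $n=1$ case), and (ii) that it satisfies the two conditions of that lemma.

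\emph{Phase 1 (query phase).} Since $\mathcal E$ is a query oracle (Definition~\ref{def:queryoracle}), its only interaction with $\pi_L$ is the following: from a configuration $c$ it either halts with an output (a follower equilibrium or a distribution over equilibria), or it emits a query observation $o \in \Omega_L$, receives the leader action $a = \pi_L(o)$, and moves to a successor configuration $c' = \delta(c,o,a)$. Any other computation the oracle performs --- drawing internal randomness, or interacting with or reading a description of the environment of $\MDP$ --- never touches $\pi_L$ and is therefore absorbed into the (stochastic) update $\delta$. I take the state of $\mathcal L$ in this phase to be $c$; the observation the $\mathcal L$-agent receives is the query $o$ determined by $c$; its action is interpreted as the leader's response; the transition is $c \mapsto \delta(c,o,a)$; and the reward is $0$. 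These transitions are Markovian precisely because the configuration $c$ records all the memory the oracle needs --- this is the one place the query-oracle restriction is essential, since an oracle that instead read a parametrization $\theta$ of $\pi_L$ could not be emulated by a POMDP whose agent \emph{is} $\pi_L$. I assume, as holds for any RL or finite learning procedure, that $\mathcal E$ halts almost surely after finitely many queries; when it halts I sample concrete follower policies $\mathbf \pi_F \sim \mathcal E(\mathcal F_{\pi_L})$ and pass to Phase 2.

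\emph{Phase 2 (evaluation phase).} Here the state of $\mathcal L$ is $(\mathbf \pi_F, s)$ with $s$ a state of $\MDP$, initialized at the start state of $\MDP$. At each step the $\mathcal L$-agent receives the leader observation generated by $O$ in $\MDP$, picks an action $a_L$ (again via the same $\pi_L$), the followers' actions $a_F$ are drawn from $\mathbf \pi_F$ inside the hidden dynamics, $s$ transitions by $T$, and the reward to the $\mathcal L$-agent is $r_L(s, a_L, a_F)$; this runs for the horizon of $\MDP$. By construction Phase 1 runs $\mathcal E$ with query access to $\pi_L$, so $\mathcal L$ computes $\mathcal E(\pi_L)$ (condition 1 of Lemma~\ref{thm:lemma}); and the only nonzero rewards in an episode are the Phase-2 rewards, whose expected sum equals $\sum_t \mathbb{E}[r_L(s_t, a_{L,t}, a_{F,t})]$ with $\mathbf \pi_F \sim \mathcal E(\mathcal F_{\pi_L})$, i.e.\ the value of $\mathcal L(\pi_L)$ is $r_L(\pi_L, \mathcal E(\pi_L))$ in $\MDP$ (condition 2). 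Finally, $\mathcal L$ is a POMDP: its state space is the disjoint union of the oracle configurations and the pairs $(\mathbf \pi_F, s)$, and its transition, observation, and reward maps are the (stochastic) functions of state and action described above, with observations in $\Omega_L$ and actions in $A_L$ once the phase is hidden. Lemma~\ref{thm:lemma} then yields that an optimal policy for $\mathcal L$, paired with $\mathcal E$ evaluated at it, is a Stackelberg equilibrium of $\MDP$.

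\emph{Main obstacle.} The substantive points are structural rather than computational. First, one must genuinely \emph{hide the phase}: if the $\mathcal L$-agent could distinguish querying from evaluation it could answer queries with a different function than the one later evaluated, breaking the correspondence to Definition~\ref{def:stackelberg}. Second, and this is the step I expect to require the most care to state cleanly, one must argue Markovianity --- that the entire run of $\mathcal E$, including its internal randomness, environment interaction, adaptively chosen and variable-length query sequences, and the final sampling of $\mathbf \pi_F$, can be folded into a single stochastic state variable; this is exactly what ``query access'' buys. A minor wrinkle is the bookkeeping between phases: Phase-1 steps carry zero reward, and since the paper's Stackelberg value is an \emph{undiscounted} sum (Definition~\ref{def:stackelberg}) this is harmless; if one insists on a discount $\gamma < 1$ in $\mathcal L$, take $\gamma = 1$ during Phase 1, or observe that a constant (or uniformly bounded) extra discount on the Phase-2 return does not move the $\argmax$ and so leaves the conclusion of Lemma~\ref{thm:lemma} intact. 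The memoryless-leader assumption matches the main text; the stateful case goes through with observation histories in place of single observations (cf.\ Appendix~\ref{sec:ap:memory}).
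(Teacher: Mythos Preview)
Your proposal is correct and follows essentially the same construction as the paper: a two-segment leader POMDP whose hidden state carries the oracle's internal configuration during the query phase and the follower policies plus the $\MDP$-state during the evaluation phase, with zero reward in the first segment and $\MDP$-rewards in the second. The paper's proof is organized identically (their state is $(z_{\mathcal E,t}, z_{\mathcal M,t}, z_{F,t})$, with $z_{F,t}$ additionally allowing stateful follower policies), and your discussion of phase-hiding corresponds to the paper's leader-invariance condition treated in Appendix~\ref{sec:ap:memory}.
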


{In the following, we provide the main idea behind this positive result, which relies on the construction of a suitable, single-agent POMDP to  model  the leader's problem.
We defer the full proof of Theorem~\ref{thm:main} 
to  Appendix~\ref{sec:ap:proof}, 
where we also  give 
 concrete examples.}

{Given $\MDP$ and} {a query oracle}  {$\mathcal E$ we define {\em a new leader POMDP $\mathcal L$} as follows: The action and observation space for the leader in $\mathcal L$ are the same as those in $\MDP$. $\mathcal L$ is then constructed in two parts. First, the leader is queried} {as often as is required to compute a follower equilibrium} 
{by the oracle $\mathcal E$;
then an episode from the original Markov game $\MDP$ plays out:
\begin{itemize}
    \item \textbf{Initial Segment:} For an initial number of steps in $\mathcal L$, each step performs one query from the follower oracle $\mathcal E$: If a given query wishes to determine the leader policy's response to observation $o$, then the leader will receive $o$ as its observation in $\mathcal L$, and the leader's action will be given to $\mathcal E$ as the response to its query. The leader will receive no reward in these steps. 
    \item \textbf{Final Segment:} Once a follower equilibrium $\mathbf \pi_F$ has been determined, the remainder of $\mathcal L$ will be constructed from the original Markov game $\mathcal M$: We let followers act according to the computed follower equilibrium $\mathbf \pi_F$ and treat them (including all their internal state) as part of the environment. 
\end{itemize}}

If the follower oracle is implemented using RL, i.e., both leader and followers use RL, then the initial segment is simply one or more episodes of $\MDP$ where the followers are learning, and the final segment is one episode from $\MDP$ where the followers have converged. This formalizes{, in a general way,} the ``train followers until convergence for every leader policy update'' approach seen in  prior work~\citep{brero2022learning}. 
{The innovation is that we} 
generalize to \textit{any} {query-based} algorithm implementing the follower oracle,
not just {followers using learning algorithms},
 and this will be crucial in the next section. {(And note that we further generalize even to non-query follower algorithms in Lemma~\ref{thm:lemma}, which 
 applies to the case of non-RL leaders; see Appendix~\ref{sec:ap:limits:pomdp}).}

The POMDP property of the construction in Theorem~\ref{thm:main} gives us strong evidence that that outer-loop inner-loop RL approaches have a solid theoretical foundation.
However, such approaches also have drawbacks (such as very long and sparse-reward episodes for the leader). The remaining question is whether this visibility is necessary, or if we could skip the initial segment in the construction in Theorem~\ref{thm:main} and have followers best-respond immediately on the first step the leader takes. For the first time, we can answer this question clearly: RL against immediately-best-responding opponents can provably diverge. 
\begin{theorem} \label{thm:divergence}
    There exists a Stochastic Markov Game, $\MDP$, where neither tabular Q-learning nor policy gradient can learn the optimal policy for the leader when
    the follower agent immediately best-responds.
\end{theorem}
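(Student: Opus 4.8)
The plan is to exhibit a single, small, explicit stochastic Markov game --- most naturally an iterated matrix game of social-dilemma type, in the spirit of the iterated Prisoner's Dilemma discussed above --- in which the leader's induced learning problem, \emph{when the follower plays an immediate best response $\mathcal E(\mathcal F_{\pi_L})$ to the leader's current policy $\pi_L$ rather than being observed while it adapts}, is simultaneously (i) non-stationary / non-Markovian from the viewpoint of tabular Q-learning and (ii) non-concave, with a strict but suboptimal local maximum, from the viewpoint of policy gradient. The common mechanism is that the best-response map $\pi_L \mapsto \mathcal E(\mathcal F_{\pi_L})$ is a discontinuous, piecewise-constant function of the \emph{whole} leader policy: the leader's reward at any individual state then depends on global features of $\pi_L$ (namely, which best-response ``cell'' $\pi_L$ falls in), which is exactly the assumption both algorithms implicitly rely on, and which the construction of Theorem~\ref{thm:main} restores by making the follower's adaptation visible in the initial segment. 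Concretely I would pick payoffs so that a ``mutual-defection'' leader policy is a trap --- it is strictly suboptimal, but starting from a symmetric / uninformed initialization the follower best-responds to it by ``punishing,'' which makes every unilateral move away from it look strictly worse to the leader --- while every Stackelberg-optimal leader policy is a ``commitment'' policy (e.g.\ tit-for-tat) lying in a different best-response cell, in which the follower cooperates and the leader obtains strictly higher return.

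\textbf{Q-learning.} First show that any optimal leader policy requires the follower to be in its ``cooperative'' best-response cell, whereas at every policy in the ``defection'' cell the greedy Q-values strictly prefer the defect action at each leader state. Then show the trap is absorbing for the greedy policy: if the running greedy policy ever lies in the defection cell, the follower best-responds to it by punishing, and since the follower reacts to the \emph{policy}, not to the momentary action, every sampled transition --- exploratory ones included --- reports the ``punished'' reward, so the Q-updates preserve the inequality that keeps defect greedy. Hence the greedy policy never leaves the defection cell and the Q-iterates converge to the value function of a strictly suboptimal policy. It remains to start inside the trap, which follows from optimistic/zero initialization together with the follower's best response to the initial (essentially uniform) policy, or can simply be forced by the payoff design; a GLIE exploration schedule makes the ``converges to'' statement clean.

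\textbf{Policy gradient.} Use the piecewise structure directly: at a leader policy interior to a best-response cell, $\mathcal E(\mathcal F_{\pi_L})$ is locally constant, so in a neighborhood $J(\pi_L) = \mathbb{E}[\sum_t \gamma^t r_L \mid \pi_L, \mathcal E(\mathcal F_{\pi_L})]$ is the leader's return against a \emph{fixed} follower strategy, a smooth function to which ordinary policy-gradient convergence arguments apply. I would arrange payoffs so that the mutual-defection policy sits strictly inside the defection cell and is a strict local maximum of $J$ there --- against the punishing follower, defecting everywhere is the leader's unique local best response, so the gradient vanishes there and points inward from nearby policies. Since every global optimum lies in a different cell, local ascent would have to cross a cell boundary, and the construction makes $J$ \emph{decrease} across that boundary (the leader must first commit unilaterally to cooperation and be exploited before the follower switches cells), so gradient ascent from the trap cannot escape; with a softmax parametrization the iterates stay interior, so this is genuine convergence to a suboptimal stationary point rather than a boundary artifact.

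\textbf{Main obstacle.} I expect two delicate points. First, making a \emph{single} game defeat \emph{both} algorithms with the right robustness: the Q-learning argument needs the trap absorbing for the \emph{behavior} policy under $\epsilon$-greedy (handled by the observation that a best-responding follower punishes exploratory and greedy actions alike, plus a GLIE schedule), while the policy-gradient argument needs the trap strictly interior to a best-response cell so that $J$ is smooth and locally concave there (handled by a strict margin in the follower's indifference conditions). Second, controlling the discontinuity of $\mathcal E(\mathcal F_{\pi_L})$ --- in particular the follower's behavior on measure-zero indifference boundaries and near initialization --- which I would sidestep by building in strict preference margins so that every policy encountered by either algorithm lies in the interior of a cell, and by checking that no tie-breaking convention for the oracle on the boundary changes the conclusion.
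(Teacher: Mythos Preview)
Your high-level framing---iterated social dilemma, piecewise-constant best-response map, leader reward depending on the whole policy---matches the paper's, but the concrete construction and the mechanism of failure are different, and your version has a gap in the Q-learning half.

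The paper does \emph{not} use a mutual-defection trap. Instead it \emph{modifies the leader's payoff matrix} so that cooperate strictly dominates defect row-wise for the leader (the standard PD leader matrix with rows swapped), while the follower keeps standard PD payoffs. The Stackelberg-optimal leader policy is still tit-for-tat (defect at $o_D$), because that commitment is what induces the follower to cooperate. But now, \emph{every} time the leader visits $o_D$---under any leader policy, under $\epsilon$-greedy or parameter-noise exploration, regardless of which best-response cell the follower is in---the sampled reward from $C$ is strictly larger than from $D$. Hence $q(o_D,C)>q(o_D,D)$ is driven by every single update at $o_D$, and the analogous reward-to-go comparison holds for policy gradient. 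The argument never needs to reason about cells, absorbing traps, or initialization: it just enumerates the experiences that can ever occur at $o_D$ and checks the sign. This is the ``missing counterfactual'' in its cleanest form: the only place defecting helps is through its effect on the follower's best response, and that effect is invisible in any sampled transition.

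Your trap-at-all-defect construction has a real hole in the Q-learning argument. In the trap the follower always defects, so the leader never visits $o_C$; the Q-values there are never updated and stay at initialization. But the follower best-responds to the \emph{full} greedy policy, including the greedy action at $o_C$. With zero initialization and, say, tie-breaking toward $C$, the greedy policy is $(D,C,D)$---suspicious tit-for-tat---and in standard PD with a long enough horizon the follower's best response to that is to cooperate, not to punish. So your claim that ``at every policy in the defection cell the greedy Q-values strictly prefer defect at each leader state'' is not established for unvisited states, and the greedy policy can sit outside the defection cell from the start. You might be able to repair this by tracking the subsequent dynamics (once the follower cooperates, $o_C$ is visited, $D$ becomes greedy there, and you fall into the trap), but that is a materially longer argument than what you sketched, and it reintroduces exactly the initialization and tie-breaking sensitivity you hoped to sidestep. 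The paper's row-swap trick avoids all of this: the conclusion at $o_D$ holds transition-by-transition, so nothing hinges on which cell you are in or what happens at off-path states.
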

We prove Theorem~\ref{thm:divergence} in Appendix~\ref{sec:ap:divergence}, and  also confirm the effect experimentally in Appendix~\ref{sec:ap:limits:pomdp} in the context of the Iterated Prisoners' Dilemma. 
{This result is highly surprising, as training against a best-responding follower at first may seem like a natural approach that one might consider using. In the proof of Theorem~\ref{thm:divergence}, we see that the RL algorithm fails due to missing counterfactuals.}
{In particular, the key idea  is that the best response to a leader's policy may depend on its behavior on all possible states, including states not visited when the follower actually best-responds. For instance, in iterated prisoner's dilemma, the threat of retaliation to a defection is crucial. But if a follower best-responds by always cooperating, the defection state will never be visited, and will not lead to corresponding updates in the RL algorithm. We use this insight to then construct an instance where all other possible visits of the defection state (from a non-optimal leader policy, or random exploration) will lead to RL updates that ``point away'' from the optimal policy.}

Collectively, Lemma~\ref{thm:lemma}, Theorem~\ref{thm:main}, and Theorem~\ref{thm:divergence}  allow us to answer the last remaining question we posed initially: Can we mix-and-match arbitrary approaches  for the leader and followers? The answer is ``Yes, we can.'' The only restriction is that if we use RL for the leader (or another approach that hinges on a Markov property), then we need to use a query-based approach for the followers and the initial-final segment construction from Theorem~\ref{thm:main}.\footnote{Note that this is strictly a special case, as there are also other approaches on either side that fall outside this. For instance, one could use evolutionary strategies for the leader, coupled with a non-query-based optimization approach for the followers. See for instance Appendix~\ref{sec:ap:limits:pomdp} for a relevant experiment.}

Appendix~\ref{sec:ap:memory} further details how to extend Theorem~\ref{thm:main} to leader policies with memory, and details a crucial condition of \textit{leader invariance}. This invariance condition requires that the leader policy act the same during oracle queries as it does during real play. In the memory-less case this follows immediately in theory, but is an important implementation detail in practice. As an illustration, we show  in Appendix~\ref{sec:ap:limits:continuous} 
an example where a seemingly innocuous step counter being made part of the leader's observation leads to learning failure.

\begin{table*}[ht] \label{table:priorapproaches}
\caption{Situating different approaches within the framework of Theorem~\ref{thm:main}. Approaches marked * do not fully satisfy the conditions of Lemma~\ref{thm:lemma} and Theorem~\ref{thm:main}. Approaches marked in bold indicate approaches that apply to general Markov games, i.e., that are sequential for both leader and followers, and otherwise unrestricted.}
\small{
\label{tbl:framework_approaches}
\begin{center}
\begin{tabular}{|l|c|c|c|}
\cline{2-4}
\multicolumn{1}{c}{} & \multicolumn{3}{|c|}{{\em Leader Learning Approach}} \\
\hline
\thead{ {\em Oracle}\\ {\em Implementation}} & \thead{Optimization, \\ Search} & \thead{Direct Gradient Descent,\\ Evolutionary Strategies} & \thead{RL }  \\
\hline
\thead{N/A - Oracle\\ Assumed Given} & \makecell{\cite{letchford2009learning}\\ \cite{peng2019learning}} &  \cite{wang2022coordinating} & \cite{zhong2021can} \\
\hline
\thead{No-Regret} & & & {\cite{brero2021learning}} \\
\hline
\thead{RL} & \makecell{\cite{bai2021sample}} & \makecell{\textbf{\cite{balaguer2022good}}*\\ {\cite{yang2022adaptive}}* }& \makecell{\textbf{\cite{zheng2022ai}}* \\ \textbf{\cite{brero2022learning}} } \\
\hline
\thead{Multitask / Meta-RL} & & & \textbf{new} \\
\hline
\end{tabular}
\end{center}
}
\end{table*}

\paragraph{Relation to prior approaches.}

A key novelty in our framework is that it generalizes to arbitrary approaches for the leader and followers, allowing us to develop  the  novel Meta-RL approach in Section~\ref{sec:contextual}, 
and also supporting the unification of
several approaches from the literature. 
{As discussed above,} the main theorem of \citet{brero2021learning} can be seen as a special case of Theorem~\ref{thm:main}, with our theorem generalizing this to arbitrary leader and follower approaches. {A key difference is that their theorem focuses on follower approaches that learn in the base Markov game $\MDP$, whereas Theorem~\ref{thm:main} includes arbitrary queries of the leader policy.}

Existing approaches can   be categorized by the approaches that they take for leader and follower modeling, and are mainly
 focused on no-regret, {RL,} or policy gradient learning to implement follower oracles, coupled with either RL or direct gradient descent methods for the leader.

\citet{brero2021learning, brero2022learning} use no-regret dynamics and
 Q-learning to implement the follower oracle inside the leader's episode rollout, and standard RL techniques to solve the resulting leader POMDP. 

\citet{balaguer2022good} use gradient methods to implement the follower oracle. In the case where both followers and world dynamics are differentiable, they directly differentiate the leader policy (as opposed to estimating its gradient using sampled trajectories). For the non-differentiable case they use evolutionary computation.
Interestingly, they seem to accumulate leader reward throughout the entire learning phase of the follower. This puts their approach  outside the scope of our Theorem~\ref{thm:main}, and may give the leader the wrong optimization target, as we detail in Appendix~\ref{sec:ap:limits:reward}.
Our understanding  is that if the leader were to optimize for its final reward at the end of follower learning instead, their approach would fall within Theorem~\ref{thm:main} and yield Stackelberg equilibria. 

\citet{zheng2022ai} similarly use two-level RL design, using policy gradient to learn each of the follower oracle and  the leader policy. While they do not specifically mention Stackelberg equilibria, this would be the target equilibrium condition in the taxation-policy design setting.
Interestingly, they use a curriculum learning approach that can be seen as a rudimentary form of the contextual policy meta-learning approach that we develop in this paper. As with~\citet{balaguer2022good}, \citet{zheng2022ai} also seem to accumulate leader reward, putting them outside the assumptions of Theorem~\ref{thm:main}.
Table~\ref{table:priorapproaches} gives an overview of these prior approaches.

\section{Meta-RL for Stackelberg RL}\label{sec:contextual}

Going beyond existing approaches, Theorem~\ref{thm:main} suggests a wide design space for implementing the follower oracle. As a key contribution, we explore using multi-task and meta-RL as a means of implementing the follower oracle. This is both to illustrate the power of Theorem~\ref{thm:main} as a way to think about Stackelberg learning, as well as due to the advantages of the approach over existing ones.

We can recognize that the follower games, $\mathcal F_{\pi_{L}}$, are  in fact a family of related problems. For this reason, the follower oracle problem can be seen as a multitask or Meta-RL problem, and solved using techniques from those fields. We make use of \textit{contextual policies}~\citep{wang2016learning},  where a context $\omega$ describes the task an agent is supposed to solve. In our case, the context provides the specific MDP among a family of MDPs a follower finds itself in, and $\omega$ is a description of the leader policy. This context, $\omega$, is concatenated to the follower agent's observation $o_{i,t}$, and agent $i$ observes $(o_{i,t}, \omega)$ at timestep $t$. {Crucially, we will construct context $\omega$ through queries of the leader policy, so that we can use the POMDP construction of Theorem~\ref{thm:main}.}

We focus on settings where the leader policy's effect on the follower can be fully understood with a small number of queries, and we directly use the leader's response to a fixed set of queries as the context $\omega$. For instance, in the Iterated Prisoners' Dilemma, we ask the leader three questions: ``How do you act on the initial step of the game?'', ``How do you act if the opponent cooperated in the previous step?'' and ``How do you act if the opponent defected in the previous step?'' Clearly, if these are the only three possible states, this is sufficient to characterize the leader policy.

We further use a two-stage training approach. In Phase 1, we train a follower meta-policy against a different, randomized leader policy in each episode. By the end of this phase, the meta-policy is able to best-respond to all possible leader policies. In Phase 2, we  train a leader policy against this follower, where the leader is queried at the beginning of each episode. {Note that this two-stage approach is separate from the two-segment construction in Theorem~\ref{thm:main}. The first training phase is purely a pre-training stage for the follower meta-policy.} {In effect, it builds a suitable query-based follower oracle.}
 {The second training phase trains the leader, and uses the two-segment construction from Theorem~\ref{thm:main} in each episode experienced through leader training.} 
Concretely, in Phase 2, every episode looks as follows: First, in the ``initial segment'' (Theorem~\ref{thm:main}), the leader sees a fixed sequence of observations $o_0,\ldots,o_k$. The follower does not act in these steps at all. The leader's actions, $a_0,\ldots,a_k$, in response to the sequence of observations, $o_0,\ldots,o_k$, are memorized. Specifically in the iterated matrix game environments we use, this sequence of observations is simply all the possible states of $\MDP$, and so the leader's response fully characterizes its policy. The actions $a_0,\ldots,a_k$ then form the context $\omega = (a_0,\ldots,a_k)$ for the follower in the ``final segment.''
In this second part of each episode, leader and (meta-) follower act together in an episode of the original game, $\MDP$. At a given timestep $t$ if the current state is $s_t$, the leader's observation is $o_{L,t} = s_t$, and the (meta-) follower's observation is $o_{F,t} = (s_t, \omega)$. In words, the follower observes its usual observation, but also the context. Through this context, the follower is informed about the leader policy it is playing against, and can best-respond to it.

Because the context $\omega$ was derived through queries that are part of the leader's batch of experiences, this forms a POMDP for the leader by Theorem~\ref{thm:main}. At the same time, because the follower does not need to learn from scratch how to respond for every leader policy update, we avoid the ``inner loop'' of typical ``outer-inner loop'' approaches.
{For settings where it is not possible to explicitly define the context $\omega$ in this way,}
 the multitask and meta-RL literature provides a range of approaches that infer context, often using recurrent networks~\citep{wang2016learning, mishra2017simple, duan2016rl, rakelly2019efficient, zintgraf2019variational, humplik2019meta}.

\subsection{Experiments}
\label{sec:experiments}
We evaluate our Meta-RL approach on both a benchmark iterated matrix game domain, comparing to existing approaches, as well as on a novel Atari 2600-based domain that is significantly more challenging. In the first, our main positive finding is that our approach can match or exceed prior approaches at greatly improved sample efficiency. In the latter, we show for the first time a positive result using a principled Stackelberg approach on a state-of-the-art general RL benchmark domain. %
We will detail each of the two domains, along with the results 
obtained using the Meta-RL algorithm. Appendix~\ref{sec:ap:experiment} and~\ref{sec:ap:atari} give further details on the algorithms  and full hyperparameters.

\paragraph{Environments: Iterated Matrix Games.}
We evaluate our contextual policy approach and general framework on an ensemble of iterated symmetric matrix games, such as the Iterated Prisoners' Dilemma~\citep{robinson2005topology}. We choose these games as they present a significant step up in complexity from previous approaches that give explicit Stackelberg guarantees, in that both leader and followers face a sequential decision-making problem. 
In these, we play a matrix game for $n=10$ steps per episode, and give agents a one-step memory. This makes these environments Markov games, with five states: one for the initial steps of each episode, and four for later steps depending on the two agents' previous actions. At each step, each agent has a choice of two actions (e.g. ``cooperate'' or ``defect''), leading to the next state, e.g. ``both cooperated''.
\begin{figure*}
\begin{center}
\includegraphics[width=0.85\textwidth]{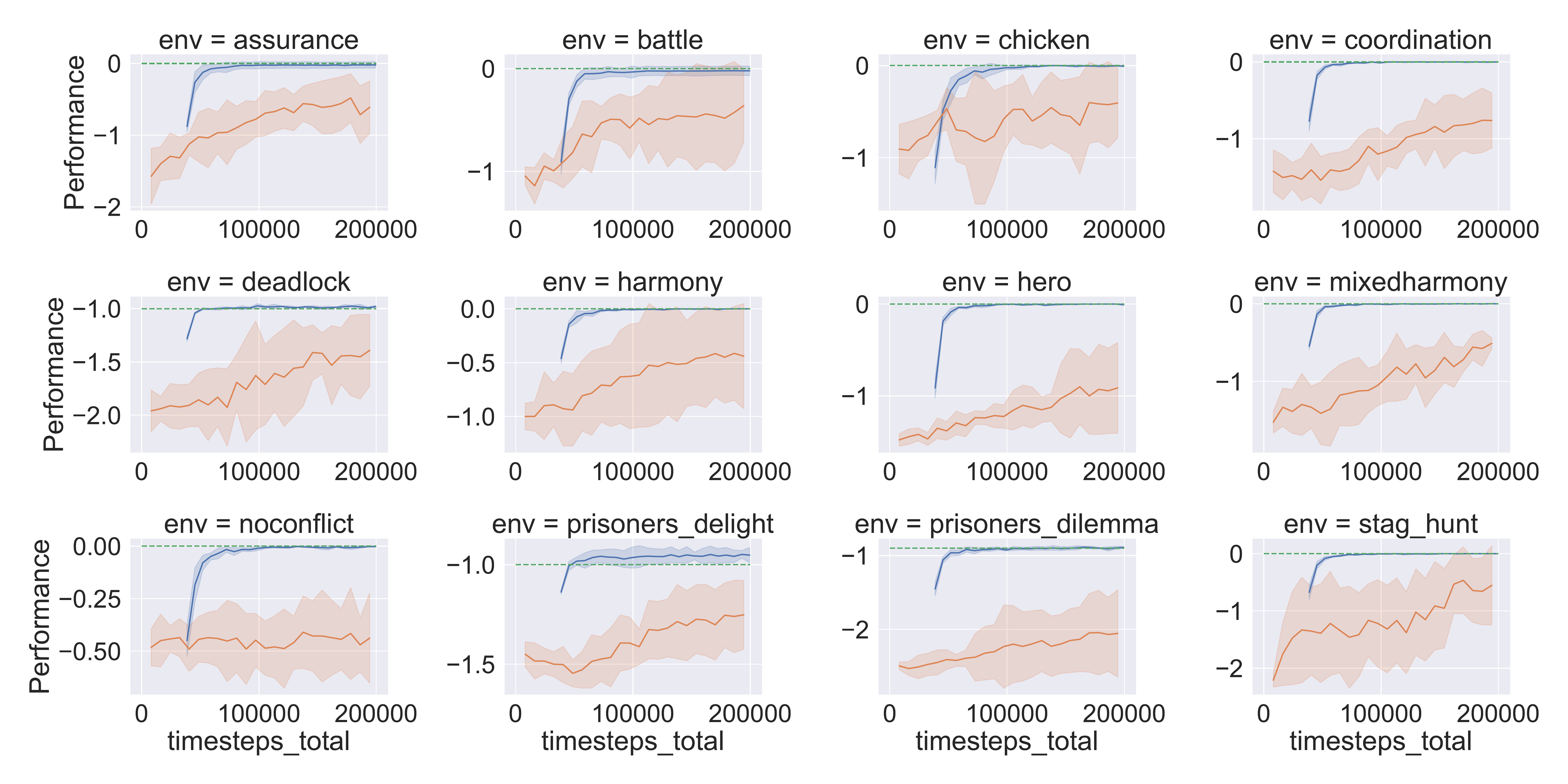}
\end{center}
\caption{Blue: Mean episode reward of our novel PPO+Meta-RL approach on 12 canonical symmetric iterated matrix games. Orange: PPO+Q-learn~\cite{brero2022learning}. Dashed green: Good Shepherd ES-MD~\cite{balaguer2022good} (final, mean episode reward  at 1.28B timesteps, estimated from Fig.~2 ibid.) 
\label{fig:allmatrices}}
\end{figure*}

Figure~\ref{fig:allmatrices} shows the performance of our Meta-RL approach using PPO for the leader. We compare against the approaches of \citet{balaguer2022good} and \citet{brero2022learning}. For our PPO+Meta-RL approach, we plot the combined environment steps used by the meta-follower training plus the leader training on the x-axis. For \citet{balaguer2022good}, we estimate performance from Figure 2 therein. Note that this is the eventual performance at the end of training (\citet{balaguer2022good} do not publish learning curves). 

In terms of mean episode reward, 
the final performance largely matches that of the ``good shepherd'' ES-MD approach, %
which is expected as both approaches achieve at or very near the theoretical optimum on all games; i.e., they both find the Stackelberg equilibrium of the game. By extension our approach also matches or outperforms all their baselines (c.f.~Figures 1 and 2 therein). The more relevant comparison between our approach and ``good shepherd'' is on speed of convergence, where
our Meta-RL approach converges in around 50k environment steps, whereas \citet{balaguer2022good} report performance at 1.28 billion environment steps in the ES-MD case. We give further details on this comparison with~\citet{balaguer2022good} in Appendix~\ref{sec:ap:performance}. 

We also see that our approach outperforms the PPO+Q-learn approach of \citet{brero2022learning}. In Appendix~\ref{sec:ap:performance} we show the PPO+Q-learn approach training for significantly longer, and see that when it does converge it does so around 500k environment steps at the earliest, whereas for most of the harder cases it still has not nearly reached optimal performance at 2M timesteps. We again note that our Meta-RL
approach shows greatly improved sample efficiency.

\paragraph{Environments: Bilateral Trade on Atari 2600.}
As a second, significantly higher-dimensional and challenging domain, we present a bilateral trade scenario on a modified Atari 2600 game (a state-of-the-art  domain in single-agent RL). We use a two-player version of  {\em Space Invaders}, and introduce an artificial resource constraint: Each agent can only fire in the game if they have a bullet available. Initially, neither player has any bullets available. Throughout the episode, we give bullets to player 1, one at a time at stochastic intervals. Player 1 can then choose to offer the sell this bullet to player 2 by offering them a price, or Player 1 can choose to use the bullet themselves. Player 2 in turn can choose to accept or reject a particular offer at a particular price. If a trade takes place, the sales price is added to player 1's reward, and subtracted from player 2's reward. Additionally, we introduce a reward scale imbalance: Each time player 1 successfully shoots an alien invader, they get a reward of 0.1. However each time player 2 shoots an alien, they get a much higher reward of 1.0. Noting that even well-trained AI agents do not hit every single shot they take, we should still expect that player 2 be able to generate just under 1.0 reward from each bullet they fire, and player 1 a much smaller reward of just under 0.1.

\begin{figure*}[h!]
\begin{center}
\includegraphics[width=0.85\textwidth]{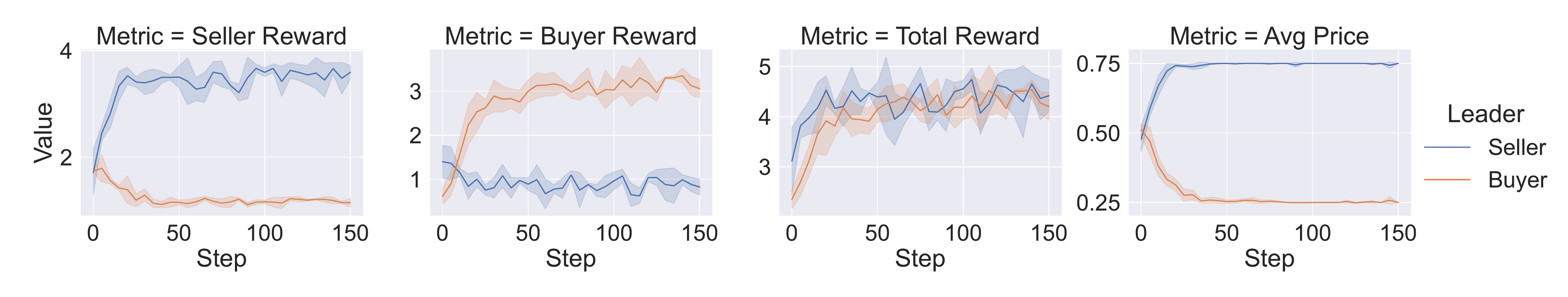}
\end{center}
\caption{Performance and behavior of PPO+Meta-RL on the Atari 2600 bilateral trade scenario. Plots show results for two distinct Stackelberg equilibria: Agent 1 (seller) as leader (blue curves) and agent 2 (buyer) as leader (orange). \label{fig:atari}}
\end{figure*}

There is more total reward generated if player 1 sells all their bullets to player 2, with the difference 
the ``gains from trade'' in economics. However,  this is not a mechanism design setting (there is no mechanism), and also that there are two Stackelberg equilibria: If player 1 is the leader, then their optimal strategy is to offer bullets to player 2 at just under player 2's average utility per bullet. Player 2 will best respond by accepting the trade, still generate small positive reward, and player 1 will receive almost the entirety of the gains from trade. In the second Stackelberg equilibrium, player 2 is the leader. Player 2's optimal strategy is to refuse any price higher than just above player 1's average utility per bullet; and player 1's best response is to offer to sell at that (low) price. In this scenario, player 1 will be left with little more reward than had they kept and used the bullets themselves, and player 2 will receive almost all the gains from trade.

Figure~\ref{fig:atari} shows that the Meta-RL algorithm is able to successfully learn this optimal behavior
for both equilibria. In this experiment we use discrete prices ($0, 0.25, 0.5, 0.75. 1.0$) for compatibility with the discrete Atari environment, so the results shown are the exact optimum.

\section{Conclusion}

We have introduced a general framework for using multi-agent RL approaches to find Stackelberg equilibria in Markov games, and discussed how this encompasses several approaches in the literature, while also conveying a much larger design space. 
{In addition, we show the necessity of our POMDP construction
 by showing that RL against} {followers that immediately best respond} 
{can fail, an important and surprising result by itself.}
 As a second key contribution, we have proposed and evaluated a novel approach to Stackelberg learning that uses Meta-RL to implement the follower oracle. This shows the power of  Theorem~\ref{thm:main},
which enables this approach, and is also a key contribution itself.
Our approach matches or exceeds the final mean episode reward of previous approaches, and does so 
at greatly improved speed of convergence. It also enables Stackelberg learning in  domains beyond the reach of previous approaches, which we show for a novel Atari 2600-based bilateral trade scenario. Finally, we show theoretically and experimentally the limits of Theorem~\ref{thm:main}, and in particular that RL algorithms can provably be unable to learn without the query-oracle special case construction. 

In addition to the technical results, we would like to offer a more high-level interpretation of the framework. A useful way to think about learning Stackelberg equilibria in Markov games is that they are, in a way, two problems in one: One, how does my strategy, i.e., choice of policy, affect the best-response of other agents? Two, how does my interactions with the environment, i.e., actions at each step, affect the reward I (and others) get? These are two very different problems, even operating at different levels---entire policy, versus action at each step. Theorem~\ref{thm:main} is giving a way to reconcile the best-response ``meta-level'' and the environment-interaction ``RL problem." In the general case, using techniques such as direct gradient descent or evolutionary policies, we focus on the best-response meta-level and either ignore the environment interaction (in evolutionary strategies) or subsume them inside an end-to-end differentiation (in direct policy gradient). 
In contrast, in the query-oracle special case, we focus on the environment interaction RL problem, and implicitly work the follower best-response into this. One way of looking at the contextual-policy follower oracle is that it makes the latter more feasible, by greatly reducing the number of leader queries compared to real environment interaction. 

We hope that this Meta-RL approach will enable Stackelberg RL approaches to scale up to richer settings, both through the explicit, contextual-policy approach taken in this paper, as well as approaches that infer context through recurrent networks. %
Beyond this, we hope the framework of Theorem~\ref{thm:main} will inspire novel ways of thinking about Stackelberg RL. One potential avenue for future work 
is to study approaches that explicitly take into account both the ``meta-level'' and ``environment-interaction'' problems. We believe that doing so could enable Stackelberg RL to scale to much more complex scenarios, and open novel applications. If successful,  this may enable
the automated, {end-to-end}
learning of system design beyond traditional settings
such as security games and mechanism design.

\section*{Acknowledgments}
{The project was sponsored, in part, by a grant from 
the Cooperative AI Foundation. 
The content 
does
not necessarily reflect the position or the policy of the
Cooperative AI Foundation
 and no  endorsement should be inferred.}

\bibliography{stackelberg}
\bibliographystyle{icml2023}

\newpage
\appendix
\onecolumn

\section{Proof and Discussion of Theorem~\ref{thm:main}}
\label{sec:ap:proof}
We include here the proof of Lemma~\ref{thm:lemma} and Theorem~\ref{thm:main}.\setcounter{theorem}{0}
\setcounter{lemma}{0}
\begin{lemma}
Given a Markov Game $\MDP$ and a follower equilibrium oracle $\mathcal E$, let $\mathcal L_\MDP$ be the learning problem the leader faces. If:
\begin{enumerate}
    \item for each choice of leader policy $\pi_L$, $\mathcal L$ computes the follower best-response $\mathcal E(\pi_L)$, and \label{cond:equi}
    \item $\mathcal L(\pi_L)$ evaluates the leader policy $\pi_L$ against the follower best-response $\mathcal E(\pi_L)$ in $\MDP$, i.e. the value of $\mathcal L(\pi_L)$ is $r_{L}(\pi_L, \mathcal E(\pi_L))$ in $\MDP$, \label{cond:reward}
\end{enumerate}
then an optimal solution $\pi_L^*$ to $\mathcal L$ together with the follower best-response $\mathcal E(\pi_L^*)$ form a Stackelberg equilibrium in $\MDP$.
\end{lemma}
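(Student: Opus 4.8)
The plan is to unwind both conditions directly against Definition~\ref{def:stackelberg} and check that an optimizer of $\mathcal L$ satisfies the defining inclusion for a Stackelberg equilibrium. First I would fix an optimal solution $\pi_L^*$ to $\mathcal L$, meaning $\pi_L^* \in \argmax_{\pi_L} \mathrm{value}(\mathcal L(\pi_L))$, and use condition~\ref{cond:reward} to rewrite $\mathrm{value}(\mathcal L(\pi_L))$ as $r_L(\pi_L, \mathcal E(\pi_L))$ evaluated in $\MDP$ — i.e.\ the discounted leader return when followers play the equilibrium produced by the oracle on the induced game $\mathcal F_{\pi_L}$. Condition~\ref{cond:equi} is what makes this substitution legitimate: it guarantees that the follower strategies embedded in $\mathcal L(\pi_L)$ are exactly a draw from $\mathcal E(\mathcal F_{\pi_L})$ rather than some partially-trained or otherwise off-equilibrium behavior.

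Next I would observe that, after this substitution, the maximization defining an optimal $\pi_L^*$ for $\mathcal L$ is literally the maximization in Definition~\ref{def:stackelberg}:
\begin{align*}
\pi_L^* \in \argmax_{\pi_L} \mathrm{value}(\mathcal L(\pi_L)) = \argmax_{\pi_L} \mathop{\mathbb{E}}_{\mathbf\pi_F \sim \mathcal E(\mathcal F_{\pi_L})}\Big[ \sum_t \mathop{\mathbb{E}}[ r_L(s_t, a_{L,t}, a_{F,t}) ] \Big].
\end{align*}
Hence $\pi_L^*$ meets the leader-side requirement of the definition. Pairing $\pi_L^*$ with any realization $\mathbf\pi_F^* \sim \mathcal E(\mathcal F_{\pi_L^*})$ (equivalently, $\mathcal E(\pi_L^*)$ in the notation of the lemma) then gives a pair $(\pi_L^*, \mathbf\pi_F^*)$ that satisfies Definition~\ref{def:stackelberg} by construction, since the follower strategies are drawn from the oracle on the game induced by $\pi_L^*$. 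If the oracle returns a distribution rather than a single equilibrium, I would note the argument is unchanged: the expectation over $\mathbf\pi_F$ is already baked into both the definition and (via condition~\ref{cond:reward}) the value of $\mathcal L$.

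The only real subtlety — and the step I would treat most carefully — is making sure the two notions of ``the leader's value'' line up: that ``value of $\mathcal L(\pi_L)$'' as used by whatever solves $\mathcal L$ really is the discounted-return objective in $\MDP$ under the oracle's followers, with the same discounting, the same reward function $r$, and the same state/action dynamics from $T$. This is exactly what condition~\ref{cond:reward} asserts, so the ``proof'' is essentially bookkeeping to confirm no mismatch (e.g.\ in how an initial no-reward segment or query phase, if present, is scored) changes the argmax. Once that identification is in hand, the result is immediate from Definition~\ref{def:stackelberg}; there is no fixed-point or convergence argument needed here, which is precisely why the lemma is labelled a warm-up and deferred to Appendix~\ref{sec:ap:proof}.
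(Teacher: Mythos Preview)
Your proposal is correct and follows essentially the same approach as the paper's proof: fix an optimizer of $\mathcal L$, use condition~\ref{cond:reward} to identify its value with $r_{L,\mathcal M}(\pi_L,\mathcal E(\pi_L))$, and read off that the argmax coincides with Definition~\ref{def:stackelberg}, with condition~\ref{cond:equi} serving only to ensure the followers embedded in $\mathcal L$ are genuinely the oracle's output. Your treatment is slightly more explicit about the distribution-valued oracle case and the bookkeeping subtlety, but the logical skeleton is identical.
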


\begin{proof} 
Assume $s^*_L$ optimally solves $\mathcal L$, i.e. $r_{L,
\mathcal L}(s^*_L) = \max r_{L, \mathcal L}(\pi_L)$. 
By condition \ref{cond:reward}, the leader's reward in $\mathcal L$ is the same as that in $\mathcal M$ when the followers play their best-response equilibrium, i.e. $r_{L,
\mathcal L}(s^*_L) = \max r_{L, \mathcal L}(\pi_L) = \max r_{L, \mathcal M}(\pi_L, \mathcal E(\pi_L))$. This immediately means that $s^*_L$ together with $\mathcal E(\pi_L)$ form a Stackelberg equilibrium in $\mathcal M$. Condition \ref{cond:equi} is only required implicitly to ensure that followers are playing their best-response equilibrium when the leader strategy $\pi_L$ is evaluated in $\MDP$. This shows the general case.
\end{proof}

We now show the main theorem.

\begin{theorem}
Given a Markov Game $\MDP$,
and a follower equilibrium oracle $\mathcal E$, if in addition to the conditions of Lemma~\ref{thm:lemma}, the follower oracle $\mathcal E$ is a \textit{query oracle} (Definition~\ref{def:queryoracle}), then the leader learning problem $\mathcal L$ can be constructed as a POMDP.
\end{theorem}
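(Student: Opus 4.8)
The plan is to construct the leader POMDP $\mathcal{L}$ explicitly, following the two-segment description sketched in the main text (initial segment for oracle queries, final segment for real play), and then verify that the resulting object genuinely satisfies the formal definition of a POMDP --- that is, that its state transitions are Markovian once we take a suitable (hidden) state space. First I would fix a query oracle $\mathcal{E}$ and note that, since $\mathcal{E}$ interacts with $\pi_L$ only through queries, its entire execution can be modeled as a (possibly stochastic, possibly adaptive) process that, at each step, either (a) emits an observation $o$ to be answered by $\pi_L$, updates its internal state using the returned action $\pi_L(o)$, and continues, or (b) halts and outputs a follower equilibrium $\mathbf{\pi}_F \in \mathcal{E}(\mathcal{F}_{\pi_L})$. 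The key move is to absorb the oracle's internal state into the hidden state of the POMDP.

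Concretely, I would define the POMDP state space as a tagged union: (i) ``query-phase'' states, each of which records the current internal configuration of $\mathcal{E}$ together with the pending query observation $o$; and (ii) ``play-phase'' states, each of which records the current state $s_t$ of $\MDP$ together with the internal states of all followers executing $\mathbf{\pi}_F$ (and the finalized $\mathbf{\pi}_F$ itself). The POMDP observation function returns $o$ in a query-phase state and $s_t$ (the leader's observation in $\MDP$) in a play-phase state; the action space is the leader's action space, unchanged. Transitions: from a query-phase state, on leader action $a$, feed $a$ to $\mathcal{E}$ as the answer to query $o$, let $\mathcal{E}$ advance, and move to the next query-phase state (if $\mathcal{E}$ emits another query) or to the initial play-phase state determined by $\mathbf{\pi}_F$ and $\MDP$'s initial distribution (if $\mathcal{E}$ halts); from a play-phase state, on leader action $a$, draw follower actions from $\mathbf{\pi}_F$ applied to their internal states, then transition $s_t \to s_{t+1}$ via $T$, updating follower internal states accordingly. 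Rewards are $0$ throughout the query phase and $r_L$ from $\MDP$ during the play phase. I would then check that this data satisfies Definition~\ref{def:markov} (single-agent case): the next state distribution depends only on the current state and leader action --- this holds because the oracle's internal state is part of the state, the followers' internal states are part of the state, and $T$ is itself Markovian --- and that Conditions~1 and~2 of Lemma~\ref{thm:lemma} are met by construction (the query phase runs $\mathcal{E}$ to completion, and the play phase evaluates $\pi_L$ against $\mathcal{E}(\pi_L)$ in $\MDP$ with all other reward zeroed out). Invoking Lemma~\ref{thm:lemma} then gives that an optimal leader policy in $\mathcal{L}$ yields a Stackelberg equilibrium, which is the point of having a POMDP (it is solvable by leader RL).

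The main obstacle, and the part deserving the most care, is the \emph{leader-invariance} issue hinted at in the main text and Appendix~\ref{sec:ap:memory}: we must ensure that the leader's behavior during the query phase is the \emph{same} policy that is later evaluated during the play phase, so that $\mathcal{E}$ truly receives queries to $\pi_L$ rather than to some artifact of the POMDP encoding. In the memoryless case this is automatic --- a memoryless policy maps each observation $o$ to $\pi_L(o)$ regardless of which phase it is in, so the answers $\mathcal{E}$ collects are exactly $\pi_L$'s responses --- but one must state this cleanly, because it is precisely what breaks when the leader has memory or when extra bookkeeping (e.g. a step counter) leaks into the leader's observation. A secondary subtlety is the potential unboundedness or data-dependence of the number of queries (and hence the length of the initial segment); I would handle this by allowing the query-phase state to carry an arbitrary oracle configuration, so that episode length need not be bounded a priori, and only remark that for RL to be practically effective one typically wants this bounded (as it is in the contextual-policy instantiation of Section~\ref{sec:contextual}). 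A final minor point is the treatment of stochastic or distribution-valued oracles: since the POMDP transition function is allowed to be stochastic, any randomization internal to $\mathcal{E}$ (including sampling $\mathbf{\pi}_F$ from a distribution over equilibria) is simply folded into the transition kernel of the query phase, so no additional machinery is needed.
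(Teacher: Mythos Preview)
Your proposal is correct and follows essentially the same approach as the paper's own proof: both construct the hidden state of $\mathcal{L}$ as (a representation of) the tuple of oracle internal state, $\MDP$ state, and follower internal state, verify the Markov property separately for the query and play phases, and then invoke leader invariance to conclude via Lemma~\ref{thm:lemma}. Your treatment is, if anything, slightly more explicit than the paper's about the tagged-union structure, the handling of stochastic oracles, and the unbounded-query issue, but these are elaborations of the same argument rather than a different route.
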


\begin{proof} 
Given a Markov Game $\MDP$ and a follower best-response oracle $\mathcal E$ that only requires query access to the leader strategy $\pi_L$, define {\em a new leader POMDP $\mathcal L$} as follows: The action and observation space for the leader in $\mathcal L$ are the same as those in $\MDP$. $\mathcal L$ is then constructed in two parts. First, the leader is queried by the oracle $\mathcal R$; then an episode from the original Markov game $\MDP$ plays out:
\begin{itemize}
    \item \textbf{Initial Segment:} For an initial number of steps in $\mathcal L$, each step performs one query from the follower oracle $\mathcal E$: If a given query wishes to determine the leader policy's response to observation $o$, then the leader will receive $o$ as its observation in $\mathcal L$, and the leader's action will be given to $\mathcal E$ as the response to its query. The leader will receive no reward in these steps. 
    \item \textbf{Final Segment:} Once a follower equilibrium $\mathbf \pi_F$ has been determined, the remainder of $\mathcal L$ will be constructed from the original Markov game $\mathcal M$: We let followers act according to the computed follower equilibrium $\mathbf \pi_F$ and treat them (including all their internal state) as part of the environment. 
\end{itemize}
We now show that $\mathcal L$ is a POMDP. 
 
\textbf{POMDP, setup:} Let the state of $\mathcal L$ be $z_t = (z_{\mathcal E,t}, z_{\mathcal M,t}, z_{F,t})$, the internal state of the follower equilibrium oracle (in the initial part of the $\mathcal L$), the state of the original Markov Game, and the internal state of the follower agents (in the final part of the $\mathcal L$). In the initial part wlog assume this is $(z_{\mathcal E}, 0,0)$, and in the final part $(0, z_{\mathcal M,t}, z_{F,t})$.

\textbf{POMDP, part 1:} By assumption, $\mathcal E$ only requires query access to $\pi_L$, i.e. if at timestep $t$, the oracle's internal state is $z_{\mathcal E, t}$ and the oracle issues the query $o_t$, then the oracle's next internal state $z_{\mathcal E, t+1}$ is a function of only $z_{\mathcal E, t}$ and $q_t$, the leader's response to the query $o_t$.
By the construction of the first part of $\mathcal L$, we have that the leader's observation at timestep $t$ is precisely the oracle query $o_t$, and so it's action $a_t$ gives the oracle response $q_t$. Together, we get that the $\mathcal L$ state at time $t+1$, $z_{t+1}$, is a function of only $z_t$ and $a_t$, showing that $\mathcal L$ is a POMDP in the initial part.

\textbf{POMDP, part 2:} In the final part of $\mathcal L$, at step $t$, the Markov Game state is $z_{\mathcal M,t}$, and leader and follower observations depend only on this state, i.e. $o_{L,t} = o_{L,t}(z_{\mathcal M,t})$ and $o_{F,t} = o_{F,t}(z_{\mathcal M,t})$. In turn, both the follower actions $a_{F,t}$ as well as the next follower state $z_{F,t+1}$, only depend on $o_{F,t}$ and the current follower state $z_{F,t}$; therefore both depend only on $z_{\mathcal M,t}$ and $z_{F,t}$. In turn, the next state of $\mathcal M$, $z_{\mathcal M,t+1}$, depends on leader and follower actions, and therefore only on leader action, $z_{\mathcal M,t}$ and $z_{F,t}$. Together, it follows that $z_{t+1}$ only depends on $z_{t}$ and the leader's action $a_{L,t}$, meaning the final part of $\mathcal L$ is Markovian.

We have therefore shown that $\mathcal L$ as a whole is a POMDP. We now show that an optimal policy in $\mathcal L$ forms a Stackelberg equilibrium.

\textbf{Stackelberg:}
By the assumption that the leader policy is invariant, we have that if $\pi_L(o) = a$ in response to an oracle query, then $\pi_L(o) = a$ in the Markov Game $\mathcal M$ as well. Therefore, the follower equilibrium $\mathbf \pi_F$ computed by the oracle at the end of the initial part of $\mathcal L$ is a best-response equilibrium to the strategy the leader plays in $\mathcal M$ in the final part of $\mathcal L$. 

Now, by construction of $\mathcal L$, the leader reward given any $\pi_L$ in $\mathcal L$ is the same as the leader reward in $\mathcal M$ when followers play $\mathbf{\pi}_F$, and by the above $\mathbf \pi_F$ is indeed a best-response equilibrium, i.e. $r_{L, \mathcal L}(\pi_L) = r_{L,\mathcal M}(\pi_L, \mathbf \pi_F) = r_{L,\mathcal M}(\pi_L, \mathcal E(\pi_L))$. 
Finally, by optimality of $\pi_L^*$ in $\mathcal L$, $\pi_L^* \in \mathrm{argmax}(r_{L, \mathcal L}(\pi_L))$, and therefore $\pi_L^* \in \mathrm{argmax}(r_{L,\mathcal M}(\pi_L, \mathcal E(\pi_L))) $. But this precisely means that $\pi_L^*$ and $\mathcal E(\pi_L^*)$ form a Stackelberg equilibrium in $\mathcal M$.
\end{proof}

\paragraph{Discussion}
We intentionally stated the lemma and theorem in a fairly abstract manner, so as to be general and cover a wide range of possible oracle implementations. The theorem may be more readily understood through concrete examples:

In the simplest case, the follower oracle is implemented using \textbf{reinforcement learning}, i.e. the leader and follower(s) all use RL. In this scenario, the initial segment of $\mathcal L$ is simply one or more episodes of $\MDP$, where the followers are learning. The final segment is an episode of $\MDP$ when the followers have converged and are not learning anymore. This ``looks'' very similar to a standard independent-learning multi-agent RL setup, but with some crucial differences (necessary due to conditions 2 and 3 from the general case of the theorem): For the leader, the initial and final segment form one single episode (but are treated as multiple episodes for the followers), and the leader does not receive reward in the initial-segment episodes. A variant of this is done in \cite{brero2022learning}, and variations that do not strictly follow conditions 2 and 3 (and thus do not strictly guarantee Stackelberg) are common in the literature as discussed in section \ref{sec:contextual}.

A related case is a follower oracle implemented using a different learning approach, such as \textbf{no-regret dynamics} in \cite{brero2021learning}. In this case, the leader POMDP looks similar to the RL case, except in the initial segment the followers are now learning using a no-regret algorithm such as multiplicative weights. This is qualitatively different, as these algorithms explore in a more systematic way than RL algorithms do.

Note that for the purposes of Theorem~\ref{thm:main} and in contrast to typical multi-agent RL setups, in both the above situations we view the leader learning as separate from the follower (reinforcement or no-regret) learning. The latter in our model is an algorithm to implement the follower best-response oracle, and it is useful to think of it in this way. In this view, the initial segment isn't a joint leader-follower multi-agent system, it is the follower oracle algorithm querying the leader policy. In the above cases the follower oracle happens to use the same or a similar algorithm as we use to learn the leader behavior, but a crucial consequence of our statement of Definition~\ref{def:stackelberg} and Theorem~\ref{thm:main} is that this need not be the case. Indeed, for any follower oracle algorithm that only uses query access to the leader policy, we can use the same construction. For instance, in our \textbf{Meta-RL} approach, we use a fixed set of queries $o_0, \ldots, o_k$ to define the context for the meta-follower. In this case, the initial segment in the leader POMDP $\mathcal L$ will always be the same sequence of observations $o_0, \ldots, o_k$. The leader's actions in responses to these observations in turn are used to form the context for the meta-follower in the final segment. The final segment is an episode of $\MDP$ played between the leader and the meta-follower, whose behavior is informed by the context.

Finally, what would a follower oracle implementation look like that falls outside the query-oracle special case of Theorem~\ref{thm:main} (but within the general case)? \textit{Firstly}, any of the above cases have a non-POMDP counterpart: We could simply omit the initial segment from the leader's training batch (but in a centralized training regime we could still run forward passes through the leader's neural network to answer the follower oracle queries). Crucially, the followers' behavior changes without any action being taken by the leader. This makes such a construction non-Markovian from the leader's point of view (thought the general case of Theorem~\ref{thm:main} still applies). However, this is not the most interesting case.
\textit{Secondly}, any follower oracle algorithm that makes use of a \textit{description} of the leader policy, rather than query access, would be incompatible with the POMDP construction. For instance, if the leader policy $\pi_L$ is parametrized by weights $\theta$, it is conceivable that a follower oracle algorithm could compute a best response directly from $\theta$. In such a case, it is not possible to roll out the best-response computation into the leader's experiences as we do through the initial segment of the special case POMDP construction. Without this initial segment, however, learning can provably fail, as Theorem~\ref{thm:divergence} shows

\section{Theorem~\ref{thm:main}: Leader Memory and Leader Invariance} \label{sec:ap:memory}
\paragraph{Leader Memory.} We state the query-oracle case of Theorem~\ref{thm:main} for memory-less leader policies, i.e. leader policies that map directly from observations to actions. This is without loss of generality because for leader policies that use memory we may take the view that the leader policy operates on belief states, mapping belief state to action. In this view, the theorem applies as-is, and we query the leader policy on belief states. This would work well, for instance, if leader memory was implemented through a sufficient statistic. Alternatively, if we want to treat memory as intrinsic to the leader policy, queries become sequences of observations. In this view, the proof applies {\em mutatis mutandis}. The main technicality in this case is to reset internal state of the leader policy between queries, so that queries are well-defined. This is also important in order to ensure the leader invariance conditions (an unrestricted LSTM could easily allow a leader to distinguish queries from real game).

\paragraph{Leader Invariance.} It may be possible to give the leader policy memory beyond the two cases above, i.e. memory with state that carries through between follower oracle queries and/or to real play. In any such cases, it is necessary that the leader policy be \textit{invariant}, meaning it is acting the same during the initial segment (i.e. oracle queries) and the final segment (i.e. original game) of the constructed leader POMDP $\mathcal L$. This has not been stated explicitly in previous works, but is a critical part of ensuring convergence to the correct equilibrium. If the leader policy were to act differently during the oracle queries, it could ``trick'' followers into suboptimal behavior that gives the leader better reward but is not a best-response, and thus not a Stackelberg equilibrium. For instance, in an iterated prisoners dilemma, a leader could pretend to be playing tit-for-tat during oracle queries, leading followers to cooperate; and  could then defect during the actual game. We show this experimentally in Appendix~\ref{sec:ap:limits:noninvariant}. Invariance is easily ensured if the leader policy cannot distinguish queries and real play, which is generally true for memory-less policies. Alternatively the leader policy could be explicitly constrained to be invariant, e.g. through an appropriate loss term.

\section{Limitations of Lemma~\ref{thm:lemma} and Theorem~\ref{thm:main}}
\label{sec:ap:limitations}
We now present experimental evidence of the limitations of our main Lemma and Theorem. In particular, we will show that violating any of the conditions of the theorem can lead to learning failure.

\subsection{Non-POMDP}
\label{sec:ap:limits:pomdp}
An interesting question we asked earlier is whether rolling out the follower queries into the leader episode to form a POMDP is strictly necessary. Theorem~\ref{thm:divergence} formally shows that this is the case, but we also test this experimentally here. Figure~\ref{fig:hiddenqueries} shows the performance of our approach on a slightly modified iterated prisoner's dilemma (see Appendix~\ref{sec:ap:divergence} for full payoff matrices). We show our standard setting where queries are part of the leader episode, as well as a setting where they are hidden from the leader. The hidden-queries setting fails to learn a sensible behavior. This is consistent across learning rates, and across algorithms. Note however that this only applies to RL algorithms. An approach that operates directly on the policy space such as Evolutionary Strategies is still able to learn successfully, as shown on the right hand side of Figure~\ref{fig:hiddenqueries}.
\begin{figure}
\begin{center}
\includegraphics[width=0.9\linewidth]{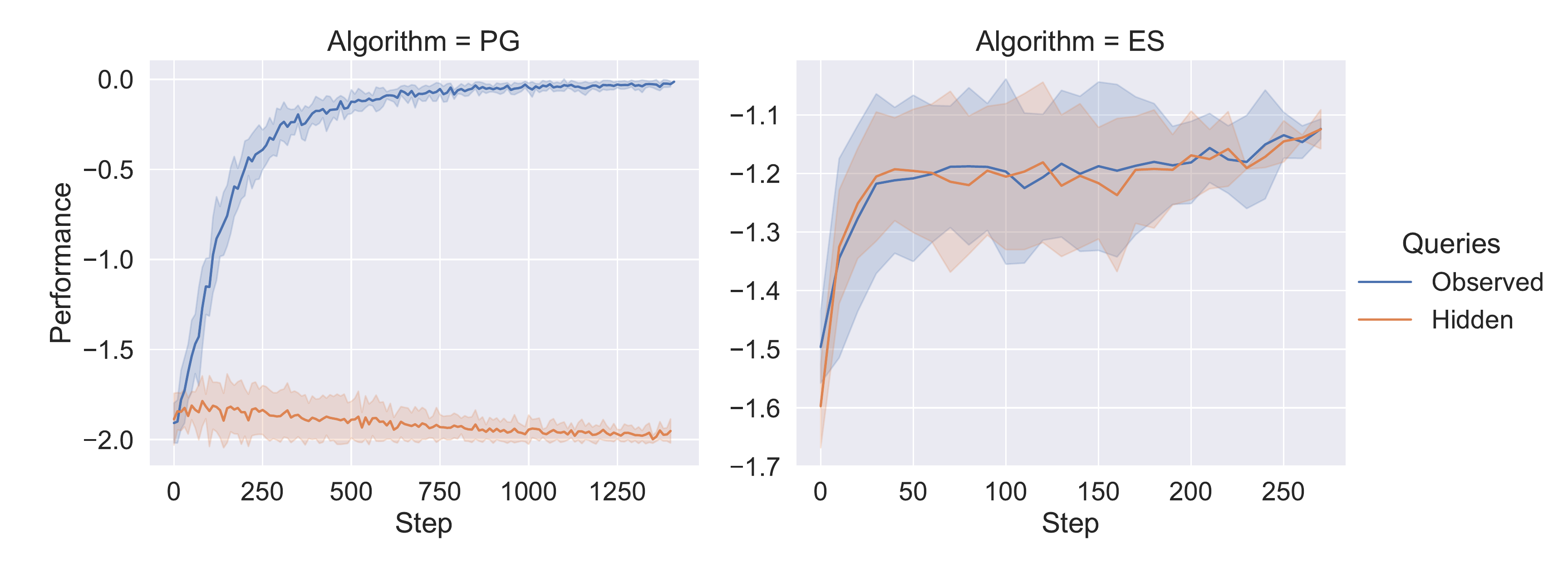}
\end{center}
\caption{Performance of contextual-policy approach with hidden and rolled-out oracle queries in iterated prisoners dilemma. \label{fig:hiddenqueries}}
\end{figure}

\newpage
    
\subsection{Non-Invariant Leader}
\label{sec:ap:limits:noninvariant}
\begin{wrapfigure}{}{0.5\textwidth}
\includegraphics[width=0.99\linewidth]{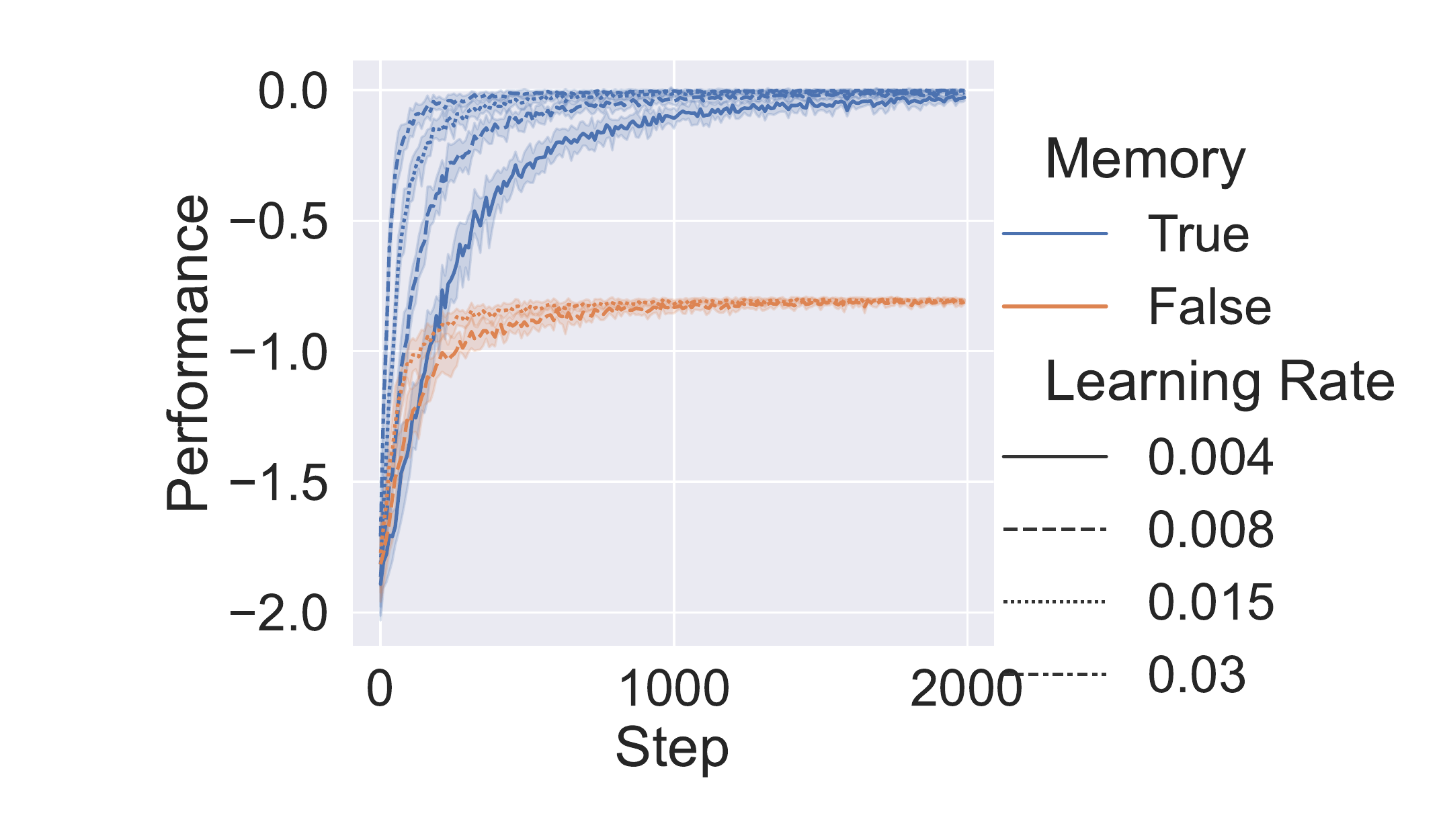} 
\caption{Leader reward with invariant and non-invariant policies in iterated prisoners dilemma.\label{fig:memory}}
\end{wrapfigure}
We also experimentally illustrate the leader invariance condition in Theorem~\ref{thm:main} in the iterated prisoner's dilemma setting. For simplicity, we emulate memory for the leader policy by concatenating a binary variable to its observation,  set to 0 during the first five steps of each episode (the queries), and 1 afterwards.\footnote{A neural network could learn to extract the same discriminator from a step counter, and a recurrent network could easily learn to keep such a counter.} 
As can be seen in Figure~\ref{fig:memory}, when given access to this additional variable, the leader gains significantly higher reward. The leader policy effectively learns to act as if it was playing tit-for-tat during the queries, thereby inducing the follower to respond by cooperating; the leader then always defects during the actual game, thereby achieving maximum reward. This is not a Stackelberg equilibrium.

\subsection{Leader Reward During Follower Learning}
\label{sec:ap:limits:reward}
\begin{figure}
\begin{center}
\includegraphics[width=0.9\linewidth]{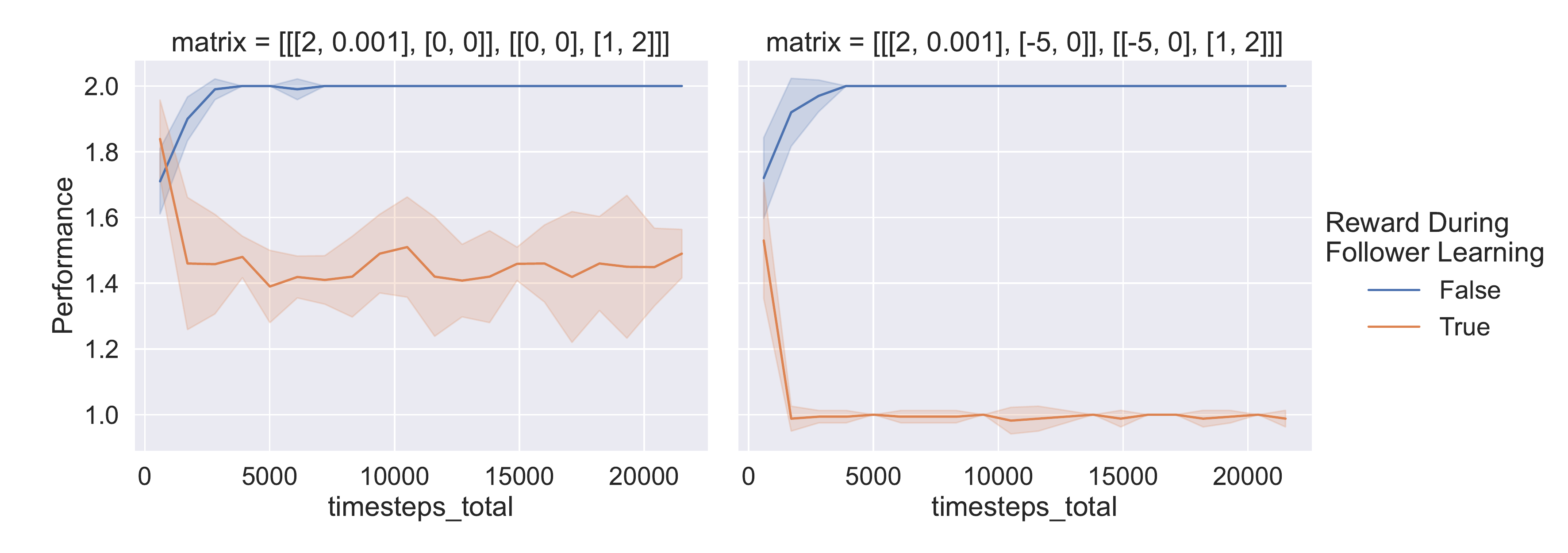}
\end{center}
\caption{Leader performance with and without reward during follower Q-learning. Clearly the leader fails to learn the sole Stackelberg equilibrium (reward 2.0) if reward is given during follower learning. Plots show reward during actual play only, i.e. without reward during follower Q-learning, as this is the relevant quantity for Stackelberg equilibria. \label{fig:bots_leaderreward}}
\end{figure}
One condition of Theorem~\ref{thm:main} is that the leader only be evaluated against followers who are best-responding. If the follower oracle is implemented using learning dynamics observable to the leader, this means that the leader must not receive reward during this learning phase. If the leader did receive reward, this could give the leader the wrong optimization target. Imagine for instance a setting where the leader has one strategy choice corresponding to a quickly-learnable follower best-response strategy that gives medium reward to the leader, and another leader strategy choice corresponding to a slow-to-learn follower strategy with high leader reward.
We can easily simulate this using a slightly modified version of the ``Battle of the Sexes'' single-shot matrix game we used as an example in the introduction. In this, we modify the follower reward so that the leader-preferred option gives the follower very little reward. We then couple this with carefully chosen (but entirely reasonable) Q-learning hyperparameters for the follower. As a result, a leader who receives reward during the follower learning phase is not able to reliable learn the correct equilibrium anymore, even in such a simple game, as Figure~\ref{fig:bots_leaderreward} shows. If we further modify the game to penalize the leader for coordination failure, this can even lead to the leader consistently learning the wrong coordination choice, as the right-hand plot shows.

Notice however that this (reward throughout follower learning) is also a valid target to optimize for, where the leader aims to optimize its expected return taking into account that followers may need some time to adjust to the leader's behavior. In the case of \cite{balaguer2022good} this is the intent, especially with regards to designing mechanisms for human participants as followers.

\subsection{Continuous Follower Learning}
\label{sec:ap:limits:continuous}
\begin{wrapfigure}{}{0.5\textwidth}
\includegraphics[width=0.9\linewidth]{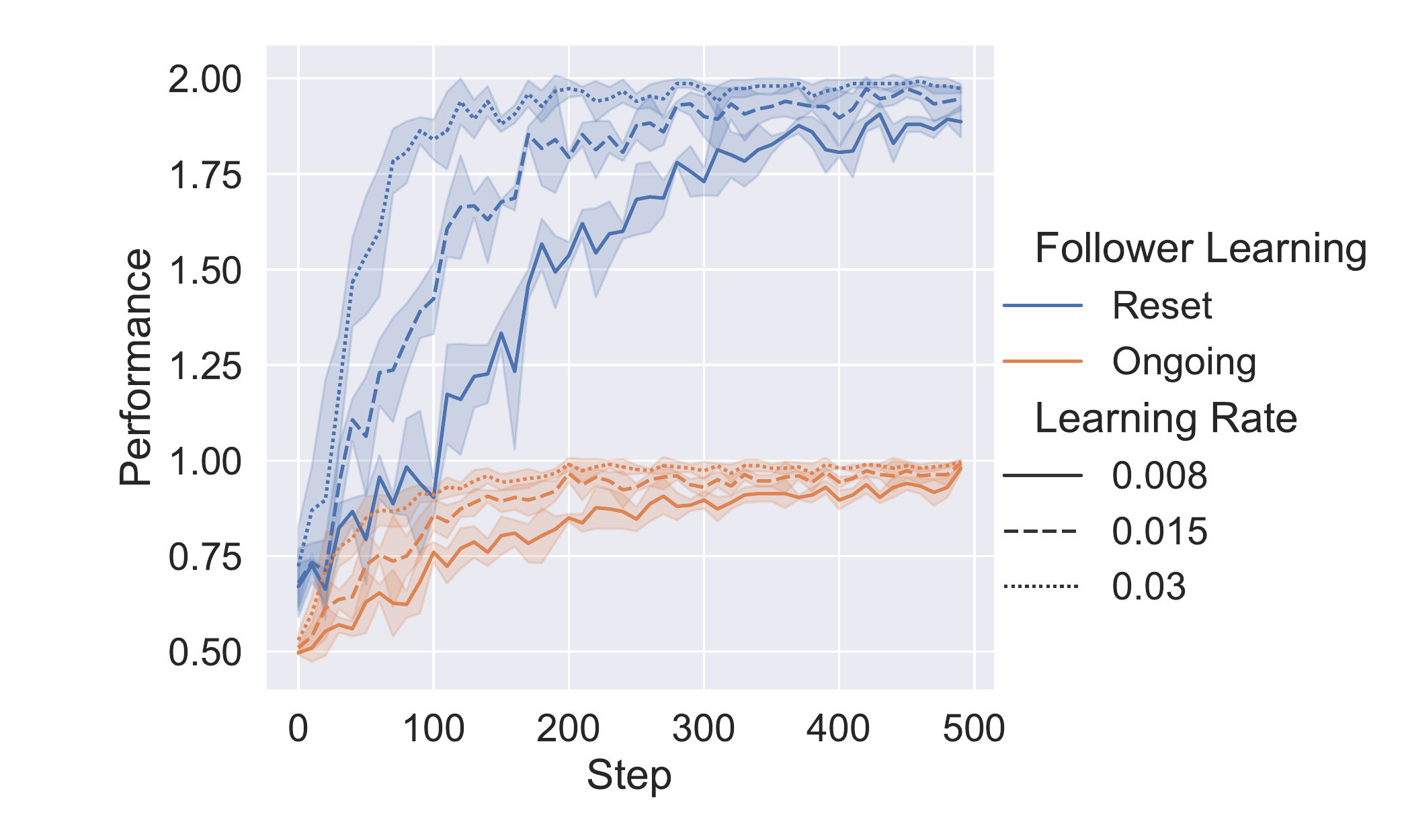} 
\caption{Leader reward with a Q-learning follower on Battle of the Sexes, where the follower initializes a blank Q-table each episode (blue) or keeps their previous Q-table (orange).\label{fig:reset}}
\end{wrapfigure}
Finally, virtually all previous approaches in the literature use some sort of learning dynamics to implement the follower oracle. A tempting way of improving learning speed in such a paradigm would be to retain follower policies between leader updates. That is, if at the end of the leader learning iteration $t$, the follower is best responding using strategy / policy parameters $\phi_{t,\text{end}}$, then instead of initializing follower weights $\phi_{t+1,\text{start}}$ randomly, set $\phi_{t+1,\text{start}} = \phi_{t,\text{end}}$.
Under the assumption that the leader policy only changed a little, and the conjecture that therefore the optimal follower policy only changed a little, this should allow follower learning to start from very near the optimum, and thus hopefully require much short inner (follower learning) loops. However, this has some drawbacks. For one, it makes the leader learning problem non-stationary. Beyond this, it can lead to learning failure, if both leader and follower get stuck on a local optimum. Figure~\ref{fig:reset} shows this in practice on the ``Battle of the Sexes'' example, where non-resetting follower learning can lead to convergence to the follower-preferred choice rather than the Stackelberg equilibrium.

\section{Proof of Non-POMDP Divergence}
\label{sec:ap:divergence}
We now present a proof of Theorem~\ref{thm:divergence}. A priori it is not clear that the query-oracle POMDP construction is strictly necessary, or if standard RL algorithm could also learn without it. Without the POMDP construction, the leader would effectively always play against followers who immediately best respond. The following theorem shows that this cannot work.

\begin{theorem} 
    There exists a Stochastic Markov Game, $\MDP$, where neither tabular Q-learning nor policy gradient can learn the optimal policy for the leader when
    the follower agent immediately best-responds.
\end{theorem}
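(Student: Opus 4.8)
The plan is to construct a small Stochastic Markov Game explicitly, exhibiting the "missing counterfactuals" phenomenon sketched in the main text. The key design principle is: the leader's optimal policy requires a particular action at some state $s_{\text{threat}}$ that is \emph{never visited} when the follower best-responds to the optimal leader policy, yet every way the RL algorithm could actually reach $s_{\text{threat}}$ (via a suboptimal leader policy, or via exploration) produces a follower best-response that makes $s_{\text{threat}}$ look bad for the leader. I would build this on a stripped-down iterated-prisoner's-dilemma-like skeleton: an initial state, a "cooperate-branch" state, and a "defect-branch" (threat) state, with the follower immediately best-responding to the current leader policy at the start of each episode.

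First I would fix the game: specify $S$, $A_L$, $A_F$, the (possibly stochastic) transition function $T$, and the reward $r_L$, choosing the numbers so that (i) the unique leader-optimal policy $\pi_L^*$ plays "tit-for-tat-like" retaliation at $s_{\text{threat}}$, inducing the follower to cooperate for high leader reward, and (ii) for \emph{any} leader policy the RL algorithm can be in while $s_{\text{threat}}$ is on-path — i.e. any policy under which the follower's best response actually visits $s_{\text{threat}}$ — the realized return gradient / Q-update at $s_{\text{threat}}$ points toward the "non-retaliate" action. Point (ii) is what makes the optimal policy an unstable/unreachable fixed point: the learner only ever gets informative updates at $s_{\text{threat}}$ when the follower's best response differs from the one induced by $\pi_L^*$, and those updates are adversarially aligned away from $\pi_L^*$.

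Then I would argue the two algorithmic cases separately. For \textbf{tabular Q-learning}: when the leader's current Q-table induces $\pi_L^*$'s behavior at $s_{\text{threat}}$, the follower best-responds by cooperating, so $s_{\text{threat}}$ is unreachable under the greedy policy; updates at $s_{\text{threat}}$ occur only through $\varepsilon$-exploration or from a non-optimal table, and by construction every such update decreases $Q(s_{\text{threat}}, \text{retaliate}) - Q(s_{\text{threat}}, \text{not})$, so the greedy policy at $s_{\text{threat}}$ cannot stabilize on "retaliate"; hence $\pi_L^*$ is not learned (with probability bounded away from, or in fact approaching, zero, depending on how tight I make the construction). For \textbf{policy gradient}: I would compute the policy-gradient estimator and show that at (and in a neighborhood of) any parameter setting, the expected update to the parameter controlling the action at $s_{\text{threat}}$ has a sign that pushes away from "retaliate" — because the state-visitation weight on $s_{\text{threat}}$ under the best-responding follower is zero exactly at $\pi_L^*$, and strictly positive-but-adversarially-signed elsewhere. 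So the gradient field has no attracting component toward $\pi_L^*$; the iterates converge to the suboptimal retaliation-free policy (or a cooperate-branch policy) instead.

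The main obstacle I anticipate is \textbf{making point (ii) robust across both algorithm classes simultaneously with a single fixed game}. It is easy to kill Q-learning or policy gradient in isolation, but I need the reward/transition numbers tuned so that (a) the follower's best-response map has exactly the discontinuity described — cooperating iff the leader retaliates at $s_{\text{threat}}$ — and (b) the off-optimal updates are adversarially signed for \emph{both} the bootstrapped Q-update and the Monte-Carlo policy gradient, ideally uniformly in the learning rate and (for Q-learning) in $\varepsilon$. Handling stochasticity carefully — I would likely use a small amount of transition noise so that $s_{\text{threat}}$ is reached with tiny probability even under $\pi_L^*$, turning "never visited" into "visited with vanishing weight", which is cleaner for the policy-gradient argument — and verifying the sign conditions is where the real bookkeeping lives; I would relegate the explicit payoff matrices and the case analysis to the appendix and here only present the game, the best-response map, and the two sign lemmas.
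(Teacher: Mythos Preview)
Your proposal is correct and follows essentially the same approach as the paper: build an iterated-prisoner's-dilemma variant where tit-for-tat is the leader optimum, the retaliation state $o_D$ is off-path under the induced follower best response, and every on-path visit to $o_D$ (via exploration or a non-optimal leader policy) yields updates that favor ``cooperate'' over ``defect'' there. The paper's construction is simpler than you anticipate---it fixes explicit $2\times 2$ payoff matrices, uses no transition noise, and handles both tabular Q-learning and REINFORCE with the same game via a short case enumeration of the four non-initial leader policies and their induced follower best responses; your worry about tuning a single instance to defeat both algorithms simultaneously turns out not to bite.
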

\begin{proof}
We consider a slight variation of the iterated prisoner's dilemma discussed in the main text. Consider payoff matrices $L = \begin{pmatrix}0 & -2\\ -1 & -3 \end{pmatrix}$ and $F = \begin{pmatrix} -1 & 0\\ -3 & -2 \end{pmatrix} $ denoting the payoff to the leader and follower agent respectively. The leader chooses the row, the first row denoting ``cooperate'' or $C$ and the second row ``defect'' or $D$, and similarly the follower chooses the column. Notice that these are the standard prisoner's dilemma payoff matrices, except the top and bottom row for the leader have been switched.

Let each agent's observation space be a one-step memory of the \textit{other} agent's previous action, that is, there are three possible observations $o_0$, $o_C$ and $o_D$. At the first step of each episode, both agents observe $o_0$. If at step $t$ the leader cooperates and the follower defects, then at step $t+1$, the leader observes $o_D$ (``other agent defected'') and the follower observes $o_C$ (``other agent cooperated''). We also write $s_{CD}$ for this state if we want to refer to both agents. In particular $s_{CD}$ corresponds to leader reward -2 for the leader and 0 for the follower (top right corner of both matrices). This is a simplification of the setting presented in the main text, but without loss of generality in the case of iterated prisoner's dilemma, and also defines a valid stochastic Markov game in its own right. Let us define an episode of our SMG to be $h$ iterations of the matrix game, where $h$ denotes the horizon or episode length of the game.
As a preliminary, notice that for any leader policy, the follower best-response is always deterministic. This is easy to check.

It is also easy to see that the optimal leader policy is to cooperate on the first step, and to then play tit-for-tat. That is, if the follower cooperated, the leader cooperates in the following step. If the follower defects, the leader defects in return. If the leader plays this policy, then the follower will in turn always cooperate, leading to leader episode reward 0, clearly the optimum. Using the construction in the query-oracle special case of Theorem~\ref{thm:main}, this optimal leader policy can be learned using standard RL algorithms. What we will now show is that if the follower best-responds without that construction, i.e. immediately without queries folded into the leader sample batch, then standard RL algorithms will diverge. This is independent of choice of hyperparameters, but as a matter of principle. 

Intuitively, the problem is one of a missing counterfactual: Notice that for the leader tit-for-tat Stackelberg equilibrium, it is essential that the leader commits to defecting if the follower defects. But notice also that when the leader plays this tit-for-tat policy, the follower will always best respond, and so the leader will never actually see a follower defection. But this also means that it cannot accumulate a gradient for this hypothetical behavior.

We now make this formal. 
Consider first the case of (tabular) Q-learning. Let $q(s,a)$ be the (leader's) Q-value of taking action $a$ in state $s$. We let $\alpha$ denote the learning rate and $\gamma$ the discount factor. Given an experience $(s,a,r,s')$ we update Q-values as follows:
\begin{equation}
    q(s,a) \hspace{0.3cm} \leftarrow  \hspace{0.3cm}(1-\alpha) \cdot q(s,a) \hspace{0.2cm}+ \hspace{0.2cm} \alpha \cdot \big( r + \gamma \max q(s',.) \big)
\end{equation}

As a convenient shorthand and as a slight abuse of notation, we will define $\theta$ as follows:
\begin{align}
    \theta_s =     \begin{cases}
      0 & \text{if $q(s,D) \leq q(s,C)$}\\
      1 & \text{if $q(s,D) > q(s,C)$}
    \end{cases} 
\end{align}
In words, we let $\theta_s = 1$  denote that the current leader policy given the $q(s,a)$ values will defect in state $s$, and 0 if the leader will cooperate in state $s$. We can then write $\theta = (\theta_0, \theta_C, \theta_D)$ for the entire leader policy induced by the current Q-table. $\theta = (0,0,0)$ would denote a leader policy that always cooperates, $\theta = (1,1,1)$ denotes a leader always defecting, and $\theta = (0,0,1)$ denotes the (optimal) tit-for-tat strategy.

Now consider the case of tabular Q-learning with parameter noise exploration. 
In this, we collect experiences from any of the eight possible deterministic leader policies. Note also that the leader action on the initial step does not affect the follower's best response strategy; and it does not influence Q-table updates for the non-initial observations $o_C, o_D$ (because $o_0$ will never be revisited and so the reward generated from $o_0$ can never appear in a Q-table update or indeed in a reward-to-go calculation in a policy gradient algorithm). We can therefore disregard the leader's initial action and for brevity focus only on the four cases $\theta = (\star,0,0)$, $\theta = (\star,0,1)$,  $\theta = (\star,1,0)$ and $\theta = (\star,1,1)$. It is easy to see that for $\theta = (\star,0,1)$ the follower best-response is to always cooperate, and for the other three cases it is to always defect. We may therefore encounter experiences of the following form:
\begin{align*}
    \theta = (\star,0,0)& &\rightarrow& &(o_D, C, -2, o_D) \\
    \theta = (\star,0,1)& &\rightarrow& &(o_C, C, 0, o_C) \\
    \theta = (\star,1,0)& &\rightarrow& &(o_D, C, -2, o_D) \\
    \theta = (\star,1,1)& &\rightarrow& &(o_D, D, -3, o_D)\\
\end{align*}
It is easy to see that under usual Q-learning update rules and for any choice of learning rate, we will have that in the limit $q(o_D,C) = -2 \cdot g$ (lines 1, 3) and $q(o_D,D) = -3 \cdot g$ (line 4) where $g = \frac{1-\gamma^{h/2}}{1-\gamma}$ is a term from the discount factor $\gamma$. Crucially we have that $q(o_D,C) > q(o_D,D)$, and therefore the policy will converge toward $\theta = (\star, \star, 0)$, which is not optimal. This holds for any choice of learning rate, discount factor and exploration parameters (as any mix of the above trajectories will lead to this).

For the $\epsilon$-greedy case, let $\theta^\epsilon_s = \theta_s + (-1)^{\theta_s} (\epsilon/2)$. That is, if our current Q-table induces the deterministic policy $\theta$, then $\theta^\epsilon_s$ gives the probability of choosing action $D$ in state $s$ in the $\epsilon$-greedy case. It is easy to see that for sufficiently small $\epsilon$ and $\theta^\epsilon_s = (\star, \epsilon, 1-\epsilon)$ the follower best-response is still to always cooperate, and for any other $\theta^\epsilon_s $ the follower best-response is to always defect.
Therefore in particular, no matter which way a particular leader action is sampled, the follower will best-respond in the same way (only depending on the leader policy as a whole, not the particular leader action sampled). In turn this means that $q(o_D,C)$ can only continue to accumulate $-2$ terms, and $q(o_D,D)$ can only continue to accumulate $-3$ terms, and the policy will converge toward $\theta = (\star, \star, 0)$, which is not optimal.

To show this for policy gradient, let the leader policy be parametrized by $theta$ as above, i.e. let $\theta_o$ be the probability that the leader policy defects given observation $o$, and $1-\theta_o$ the probability that the leader cooperates given $o$. Recall the basic REINFORCE gradient update rule:
\begin{equation}
    \theta \leftarrow \theta + \alpha G_t \nabla_\theta \ln \pi_\theta(a_t \vert o_t)
\end{equation}
Here $G_t$ denotes the (discounted) ``reward to go'', i.e. $G_t = r_t + \gamma r_{t+1} + \gamma^2 r_{t+2} \ldots$ as usual. A very similar argument as in the Q-learning case now holds to show that the reward-to-go from cooperating when observing $o_D$ will always be larger in expectation than the reward-to-go from defecting, because $r_t$ when defecting is smaller than $r_t$ when cooperating given $o_D$ and the remainder of the sum in $G_t$ is the same in expectation. This in turn pushes gradients toward cooperation, and away from the optimal tit-for-tat policy.
\end{proof}
The above holds for tabular Q-learning and basic policy-gradient with direct parametrization, but likely can be extended to further RL algorithms such as DQN or actor-critic. 

Notice the key difference in the query-oracle POMDP construction: In this, the oracle must query the leader policy for its action given $o_D$ at least once in the initial ``oracle'' segment of the episode. That action therefore sees as its reward to go the reward from the entire final segment, i.e. the entire episode reward of the original Markov game. Intuitively, the leader gets to see at least one experience where it retaliates on a follower defection and this leading to an entire episode of cooperation and good rewards. Without the oracle query, the leader never gets to see this, and cannot learn from it. It may still see experiences where it retaliates for defection, but these will be from within the actual episode, will not influence follower behavior, and will lead to strictly worse rewards than cooperating. Finally, it is also clear that this only applies to typical RL algorithms that learn on taking actions in individual steps. Approaches that learn on the policy space as a whole, such as evolutionary strategies, are not affected by this (as indeed they never look at individual steps and actions at all).

\section{Necessary Conditions for Stackelberg Convergence}
\label{sec:ap:necessary}
It may also be interesting to consider the inverse direction of Theorem~\ref{thm:main}, i.e. what are necessary conditions that follow from Stackelberg convergence. The resulting theorem is not very strong, but still informative, as it suggests avenues for future research. Recall that in Theorem~\ref{thm:main} we map a Markov game to a single-agent RL problem (POMDP) for the leader. In the general case this is simply taking the leader's view of the original Markov game as-is, and in the query-oracle special case we construct a POMDP that incorporates oracle queries. We then show that a solution to the leader's POMDP together with the follower best-response forms a Stackelberg Equilibrium.

Consider now the reverse: Suppose we are given some mapping from Markov game to leader POMDP, and a guarantee that no matter the original Markov game, an optimal solution to the leader POMDP it maps to forms part of a Stackelberg equilibrium. What needs to be true of any such mapping? We formulate this here in a slightly more general manner, in that we also allow an additional (not necessarily identity) mapping between leader policies in the Markov game and the POMDP.

\begin{theorem}[Necessity]
Suppose we are given mappings $\mathcal L: \mathcal M \mapsto \mathcal L(\mathcal M)$ and $l: \Pi_{\mathcal L} \rightarrow \Pi_{\mathcal M}$. $\mathcal L$ maps any Markov Game $\mathcal M$ to a single-agent RL problem, and $l$ maps policies in $\mathcal L$ to policies in $\mathcal M$. Furthermore suppose that whenever a policy $\pi_{L,\mathcal L}$ optimally solves $\mathcal L(\mathcal M)$, then $l(\pi_{L,\mathcal L})$ together with $\mathcal E(l(\pi_{L,\mathcal L}))$ are a Stackelberg equilibrium in $\mathcal M$. Then the following two conditions must be true of $\mathcal L$ and $l$.
\begin{enumerate}
    \item The leader reward in $\mathcal L$ is maximized by the same choice of strategy as the leader reward in $\mathcal M$ when followers play $\mathcal E(\pi_L)$, i.e.
    $$ l\big( \argmax r_{L, \mathcal L}(\pi_L)\big) \subseteq \argmax r_{L, \mathcal M}(\pi_L, \mathcal E(\pi_L))$$
    \item $\mathcal L$ implements a follower equilibrium oracle $\mathcal E(\pi_L)$
\end{enumerate}
\end{theorem}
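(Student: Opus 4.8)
The plan is to read both conditions off Definition~\ref{def:stackelberg}, treating the statement as a converse to Lemma~\ref{thm:lemma}; the only real work is that Condition~2 must hold for \emph{every} leader policy whereas the hypothesis only constrains optimal ones, so I will manufacture the extra leverage by varying $\MDP$ through reward shaping. Condition~1 is essentially immediate: let $\pi_{L,\mathcal L}^\ast \in \argmax_{\pi_L} r_{L,\mathcal L}(\pi_L)$ be any optimal solution of $\mathcal L(\MDP)$. By hypothesis $l(\pi_{L,\mathcal L}^\ast)$, together with $\mathcal E(l(\pi_{L,\mathcal L}^\ast))$, is a Stackelberg equilibrium in $\MDP$, and unwinding Definition~\ref{def:stackelberg} this says exactly that $l(\pi_{L,\mathcal L}^\ast) \in \argmax_{\pi_L} r_{L,\MDP}(\pi_L,\mathcal E(\pi_L))$. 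Since $\pi_{L,\mathcal L}^\ast$ was an arbitrary maximizer of $r_{L,\mathcal L}$, this gives the claimed inclusion $l\big(\argmax r_{L,\mathcal L}(\pi_L)\big)\subseteq \argmax r_{L,\MDP}(\pi_L,\mathcal E(\pi_L))$.

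For Condition~2 the point is that the follower behavior implicitly built into the single-agent problem $\mathcal L(\MDP)$ must reproduce $\mathcal E(\mathcal F_{\pi_L})$ (in terms of the leader reward it induces) for \emph{all} $\pi_L$, not only the optimal one. I obtain this from the fact that the hypothesis is quantified over all Markov games. Fix a target leader policy $\hat\pi_L$ and build $\MDP_{\hat\pi_L}$ identical to $\MDP$ in transitions, observations and all follower rewards, but with the leader reward augmented by a shaping bonus that is maximized, step by step, precisely by acting as $\hat\pi_L$ (e.g.\ add $B\cdot\mathbf{1}[a_{L,t}=\hat\pi_L(o_{L,t})]$ with $B$ large). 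Since only the leader reward changed, the induced follower games $\mathcal F_{\pi_L}$ and hence $\mathcal E(\mathcal F_{\pi_L})$ are untouched for every $\pi_L$, while for $B$ sufficiently large the Stackelberg-optimal leader strategies in $\MDP_{\hat\pi_L}$ are exactly those agreeing with $\hat\pi_L$ on the states visited under best response. Applying the hypothesis (and Condition~1) to $\MDP_{\hat\pi_L}$, every optimal solution of $\mathcal L(\MDP_{\hat\pi_L})$ then maps under $l$ to such a policy, which pins down the leader value at $\hat\pi_L$ inside $\mathcal L(\MDP_{\hat\pi_L})$ to be (up to the shaping term, which is known for $\hat\pi_L$ and can be subtracted back out) the value obtained against $\mathcal E$-followers; letting $\hat\pi_L$ range over all leader policies, the follower component of $\mathcal L(\cdot)$ must therefore coincide with the oracle $\mathcal E$ up to a leader-reward-preserving transformation, which is what ``$\mathcal L$ implements $\mathcal E$'' means.

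The main obstacle is making ``$\mathcal L$ implements a follower equilibrium oracle'' operationally precise: because $\mathcal L(\MDP)$ is only an abstract single-agent problem, the cleanest route is to \emph{define} the oracle realized by $\mathcal L$ as the follower behavior appearing in it and then prove equivalence, which is why the theorem is only ``essentially'' a necessity statement — it holds modulo reward-preserving transformations (the same caveat flagged after Lemma~\ref{thm:lemma}), and one must also assume $\mathcal L$'s dependence on the leader's reward function is regular enough that the shaping term can indeed be isolated. Two further, more routine points need care: the shaping bonus must be routed so it cannot leak into follower payoffs or the transition function (otherwise $\mathcal F_{\pi_L}$ would move), and because reward shaping can only pin down the leader's behavior on reachable states, $\hat\pi_L$ is recovered only up to behavioral equivalence there — the same counterfactual-visibility phenomenon that drives Theorem~\ref{thm:divergence}. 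Non-injective or non-surjective $l$ (several $\mathcal L$-policies collapsing to one $\MDP$-policy, or $\MDP$-policies with empty preimage) is pure bookkeeping and does not affect the argument.
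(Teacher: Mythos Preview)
Your treatment of Condition~1 matches the paper's exactly: both simply unwind Definition~\ref{def:stackelberg} at an arbitrary maximizer of $r_{L,\mathcal L}$.

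For Condition~2 you take a genuinely different route. The paper argues via a cardinality/arbitrariness observation: since $\MDP$ can be chosen freely, both $\mathcal E$ and $r_L$ can independently be made arbitrary functions (using partial observability to hide whatever complexity is needed), and since by Condition~1 $\mathcal L$ must produce the correct argmax of $r_L\circ\mathcal E$ for every such choice, $\mathcal L$ must in effect compute $\mathcal E$. Your approach instead keeps $\mathcal E$ fixed and varies only the leader reward through explicit shaping bonuses, forcing each candidate $\hat\pi_L$ to become Stackelberg-optimal in turn and reading off the follower behavior at that point.

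What each buys: the paper's argument is shorter and, because it never tries to relate $\mathcal L(\MDP)$ to $\mathcal L(\MDP_{\hat\pi_L})$, does not need the regularity assumption you flag (that $\mathcal L$'s dependence on the leader reward be separable enough to isolate and subtract the shaping term). It treats $\mathcal L$ purely as a black box on the entire class of Markov games. Your approach is more constructive and makes the link to the reward-shaping caveat after Lemma~\ref{thm:lemma} explicit, but pays for this with that extra assumption and with the on-path/off-path issue you note. Both are proof sketches at a comparable level of formality, and both ultimately rest on the fact that ``$\mathcal L$ implements $\mathcal E$'' is only meaningful up to leader-reward-preserving transformations.
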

\begin{proof}[Proof (Sketch)] The first condition immediately follows from the problem statement. For the second condition, consider that given full freedom in choosing $\MDP$, we can construct $\MDP$ so as to let $\mathcal E$ be any arbitrary function from leader to follower policy space. Similarly, we can choose $\MDP$ so that $r_L$ is any arbitrary function. Both of these follow from cardinality arguments, and the observation that since Markov games may be partially observable we are essentially unrestricted in the complexity of the Markov game we choose to construct even for small strategy spaces. Since by the first condition $\mathcal L$ needs to compute $r \circ \mathcal E$, both of which can be arbitrary, it thus also needs to compute $\mathcal E$.
\end{proof}
The main difference to the conditions in Theorem~\ref{thm:main} is that we can only show that the argmax of the leader reward needs to be that of the original Markov game, not that the rewards need to be identical. This is in a way trivial (of course Theorem~\ref{thm:main} still holds if we scaled leader rewards in the leader learning problem by a constant factor), but it also suggests that reward shaping may be a viable technique to accelerate leader learning, potentially still with provable Stackelberg equilibrium guarantees.

\section{Further Experiment Details: Iterated Matrix Games}
\label{sec:ap:experiment}
\paragraph{Environments} We use the 12 canonical symmetric matrix games identified in \cite{robinson2005topology} and also used by \cite{balaguer2022good}. We construct Markov games from these matrices by concatenating multiple iterations into an episode, and giving both agents one-step memory of both agents' action in the previous step. We use $n=10$ steps per episode. Table~\ref{tab:matrices} shows the payoff matrices for all the Markov games, reported on the same scale as the figures. During training, we scale rewards to be centered at 0, i.e. taking values $-1.5$, $-0.5$, $0.5$, $1.5$, but we report results offset to match the reward scales used by \cite{balaguer2022good}. This has no effect on comparability of results.

\paragraph{Algorithm.}
We focus on the contextual policy meta-learning approach described in subsection \ref{sec:contextual} for followers, and standard RL for the leader: At the beginning of each episode, the leader is queried (as part of the episode rollout) for its action in each possible state of the environment. Its responses are then concatenated to the follower observation. In a pre-training phase, we train the follower against randomly sampled leader policies. In the main training phase, we then train the leader against the follower meta-policy. Algorithm~\ref{alg:contextual} in the Appendix details this in pseudo-code.
An advantage of the generality of our framework is that it is agnostic to which specific RL algorithm is used. We generally use a standard policy gradient (PG) algorithm for the followers, although our results do not depend on this specific choice. 

Algorithm~\ref{alg:contextual} details the two-phase learning algorithm we use. In all the experiments shown in the main text, we use policy gradient to train the follower meta-policy in the pre-training loop. We use PG \citep{sutton1999policy}, PPO \citep{schulman2017proximal} and DQN \citep{mnih2013playing,mnih2015human} in the main training loop, as indicated in the respective figures. We use linear models, and disable exploration in the leader policy while pre-training the follower and vice versa. Table~\ref{tab:hyperparam} lists the hyperparameters used for each of these algorithms. Any hyperparameters not listed were left at default values in \textbf{rllib} version 2.0.0. All experiments were run with a single rollout worker (per experiment), and using Torch.

\paragraph{Equilibrium Verification.} At the end of every experiment, we freeze the leader policy and further train the follower policy for $n=50$ iterations. Unlike in the pre-training phase, we here train them only against the specific leader policy trained in the main training loop. This is to further verify that the policies indeed form a Stackelberg equilibrium, and in particular that the follower meta-policy is best-responding to the trained leader. If this is the case, we should not see any change in leader or follower performance in this post-training phase. If the follower meta-policy was \textit{not} already best-responding to the leader, we may see an increase in follower performance during this post-training phase. In all of the experiments in this paper (except the ones designed to show failure modes) we see no follower improvement, i.e. behavior consistent with a Stackelberg equilibrium. This is not shown in the training curves in the figures, but can be reproduced from the source code.

\paragraph{Implementation and Environment.}
All experiments were implemented using Ray / RLlib 2.0.0 \citep{liang2018rllib}. Experiments were run on recent Intel Xeon processors with a single core and 2GB RAM per experiment.

\paragraph{Hyperparameter Tuning.}
Learning rates and batch sizes were tuned using grid search, with some additional tuning using the HyperOpt Python package~\citep{bergstra2013making}, yielding no further improvement however.

\section{Further Experiment Details: Atari 2600}
\label{sec:ap:atari}

\paragraph{Environment.} We modify the Atari 2600 game ``Space Invaders''. We read from emulator RAM to detect when a shot has been fired, and by which player. Separately in a Python wrapper we keep a count of how many shots each player has available. We decrement this whenever we detect that the player fired a shot. If the Python variable keeping track of the available bullets reaches zero, we overwrite the player action that is fed to the Atari emulator to not-firing. Both players start with zero available bullets, but we increment the bullets available to player 1 at stochastic intervals for up to a total of five times per episode. 

We implement a bilateral trade between agents: The selling agent may offer a price, and the buying agent may choose to accept this price. 

\paragraph{Neural Network Architecture.} This is implemented by augmenting both action and observation space, both providing a dictionary of both the underlying Atari action/observation, as well as the new economic action and observations. 

The action space contains  the original Atari action, as well as the trading action. For the seller, the trading action is picking one of several discrete price points, where we choice $n=5$ price points ranging from 0 to 1 in 0.25-step increments. For the buyer, instead of giving a discrete buy / don't-buy action, we let the buyer policy set a maximum price it is willing to buy. If the offered sales price is below the maximum buying price of the buyer, the trade happens, and the price paid is that set by the selling agent. It is easy to see that this is equivalent to letting the buying agent observe the price offer and respond with acceptance or rejection. We chose this implementation as it makes implementing the follower oracle easier when the buyer is Stackelberg leader, but it does not affect the outcome. 

In the observation space, we provide a Dictionary to each agents containing both the original Atari 2600 image observation, as well as all the relevant economic information (number of bullets the agent currently has available, if applicable price offered by the other agent, whether a trade is current being proposed). In the neural network, we run these economic features through a separate fully connected layer, which feeds into a joint logits layer. The Atari input is run through default RLlib CNN and fully connected layers. Figure~\ref{fig:nn_architecture} shows this neural network architecture as a diagram.

\begin{figure}
\begin{center}
\includegraphics[width=0.7\textwidth]{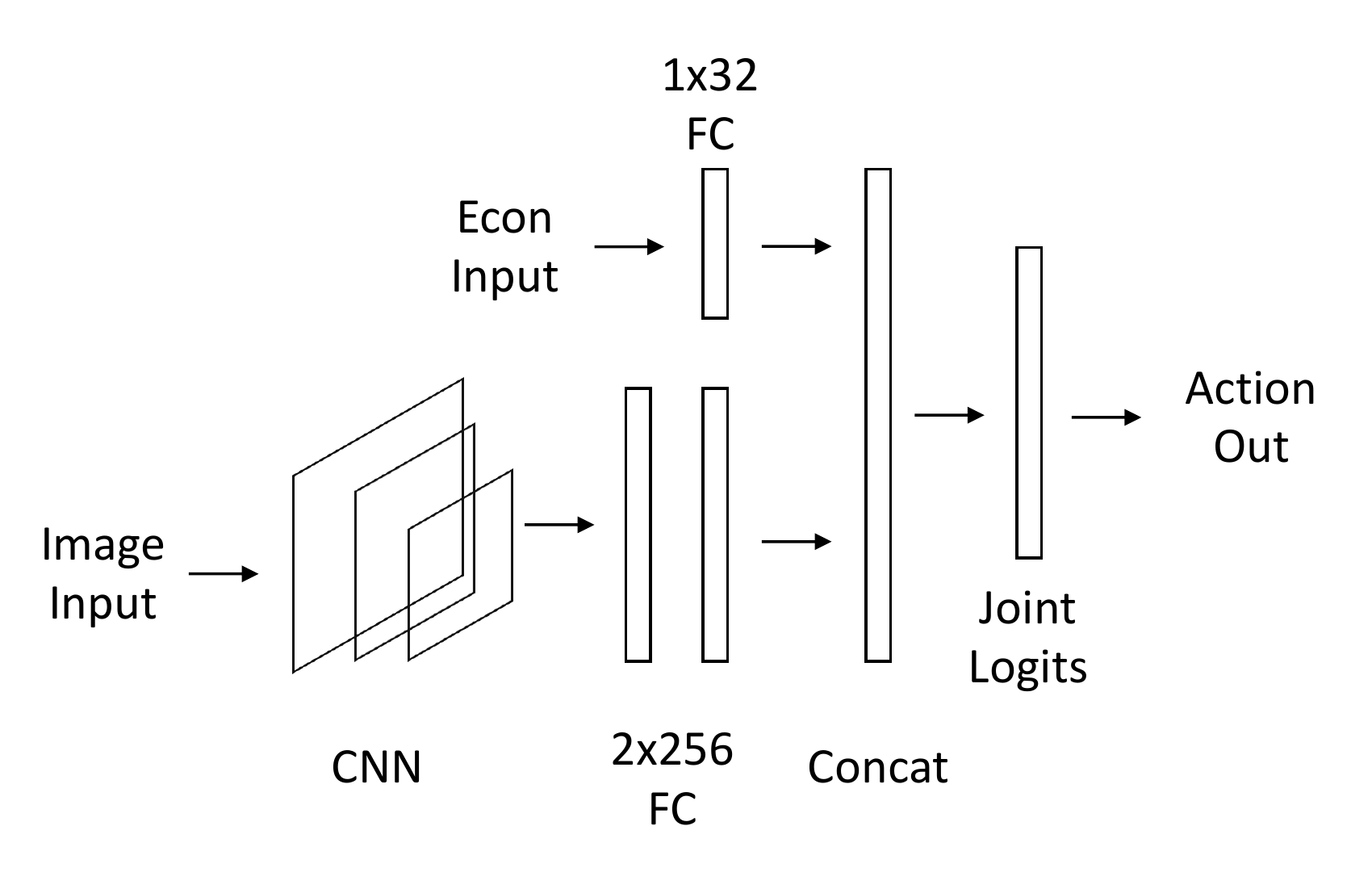}
\end{center}
\caption{Neural Network Architecture used in the Atari 2600 bilateral trade experiments.}
\label{fig:nn_architecture}
\end{figure}

\paragraph{Algorithm.} We use standard PPO for both the leader and follower. Hyperparameters were taken from RLlib tuned examples and are listed at the end of Table~\ref{tab:hyperparam}. For convenience, we initialize weights of the CNN and default RLlib FC layers to weights obtained from training agents in the unmodified game. This speeds up training, but is not strictly necessary. We utilize the same Meta-RL approach as we do in the iterated matrix game experiments: We first train a meta-follower. In this phase, we let the gameplay actions of the leader agent be controlled by an agent trained on the unmodified game, but we randomize the leader's economic actions. Once this meta-follower training has finished, we train the leader. In this phase, the meta-follower weights are frozen, and only the leader policy is trained. In the Atari experiments, we let the meta-follower query the leader immediately before each trade rather than at the start of the episode, as this allows us to fold the queries into the trading exchange.

\section{Further Details on Performance Comparisons}
\label{sec:ap:performance}
In Figure~\ref{fig:allmatrices} we compare our Meta-RL approach with the PPO+Q-learn approach of \cite{brero2022learning} and the ES-MD approach of \cite{balaguer2022good}. 

For \cite{brero2021learning}, we implement follower Q-learning using information therein. Hyperparameters for both the leader and the follower were tuned using the HyperOpt package~\citep{bergstra2013making}. In Figure~\ref{fig:allmatrices} we plot learning curves up to 200k timesteps, as our approach converges before that point. We show in Figure~\ref{fig:allmatrices_2M} learning curves until 2M timesteps. We can see that in some cases PPO+Q-learn eventually converges to the optimum, while in the majority of cases this still has not happened by 2M timesteps. 

For \cite{balaguer2022good}, we estimate their performance from Figure~2 therein. Notice that that figure is \textit{not} a learning curve, but represents a single inner loop at the end of their training procedure. In the ES-MD case, \cite{balaguer2022good} report their performance after 1.28 billion environment steps. In the Diff-MD case, a comparison of sample complexity is difficult, as that approach uses a description of the environment rather than sample access. The closest we can come to a like-for-like comparison is noting that \cite{balaguer2022good} report performance for Diff-MD after 500k computed expected episode returns with 10-step episodes. In some sense this could be seen to be equivalent to 5M environment steps as a lower bound.

\begin{figure}
\begin{center}
\includegraphics[width=\textwidth]{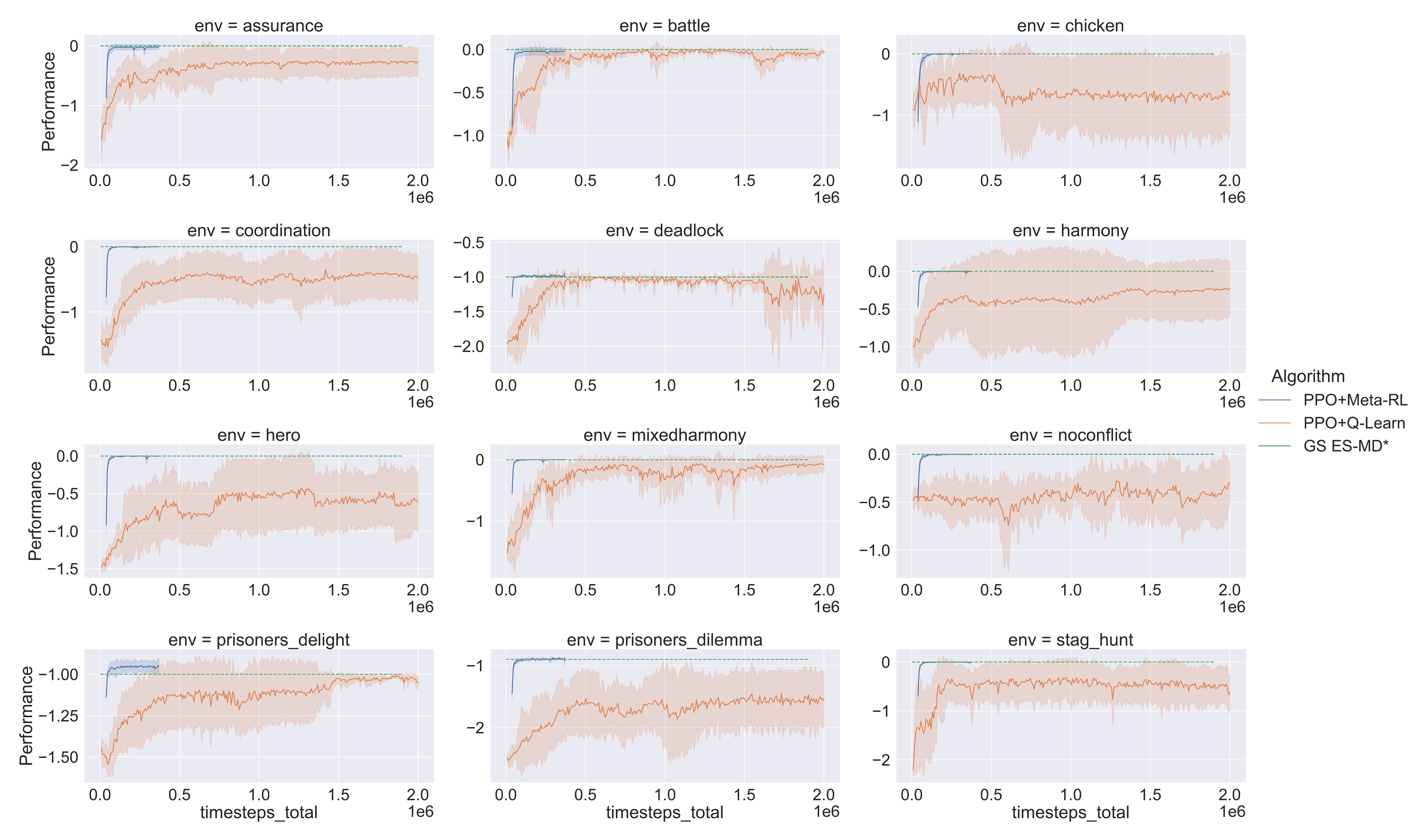}
\end{center}
\caption{Performance on symmetric matrix games (see Figure~\ref{fig:allmatrices}) up to 2M timesteps.}
\label{fig:allmatrices_2M}
\end{figure}

\begin{algorithm}
\caption{Contextual Policy}\label{alg:contextual}
\begin{algorithmic}
\STATE \textbf{Pre-Training Loop}
\STATE Initialize follower policy $\pi_F$
\FOR{each pre-training iteration}
    \FOR{each episode per sample batch}
        \STATE Sample a random leader policy $\pi_L^r$
        \FOR{each $o_L \in O_L$}
            \STATE Query $\pi_L^r$ for $\pi_L^r(o_L)$    
        \ENDFOR
        \STATE Set context $\omega = \pi_L^r(o_L), o_L \in O_L$  
        \FOR{each episode step}
            \STATE Return $o_{L,t}$ to leader, $(\omega, o_{F,t})$ to follower 
            \STATE Step environment using $a_{L,t} = \pi_L^r(o_L),  a_{F,t} = \pi_F(\omega, o_{F,t})$
        \ENDFOR
    \ENDFOR
    \STATE Update follower policy $\pi_F$ using collected sample batch using PG/PPO/DQN
\ENDFOR
\STATE \textbf{Main Training Loop}
\STATE Initialize leader policy $\pi_L$
\FOR{each training iteration}
    \FOR{each episode per sample batch}
        \FOR{each $o_L \in O_L$}
            \STATE Query $\pi_L$ for $\pi_L(o_L)$    
        \ENDFOR
        \STATE Set context $\omega = \pi_L(o_L), o_L \in O_L$  
        \FOR{each episode step}
            \STATE Return $o_{L,t}$ to leader, $(\omega, o_{F,t})$ to follower 
            \STATE Step environment using $a_{L,t} = \pi_L^r(o_L),  a_{F,t} = \pi_F(\omega, o_{F,t})$
        \ENDFOR
    \ENDFOR
    \STATE Update leader policy $\pi_L$ using collected sample batch using PG/PPO/DQN
\ENDFOR
\end{algorithmic}
\end{algorithm}

\begin{table*}[ht] \label{tab:matrices}
  \centering
  \caption{Payoff Matrices used in the matrix-game experiments}

    \begin{tabular}{ccc}
    \toprule
    \multicolumn{3}{l}{\textbf{Iterated Matrix Games (Figure~\ref{fig:allmatrices} etc.)}} \\
    \midrule
    \textbf{Name} & \textbf{Leader Payoff} & \textbf{Follower Payoff} \\
    \midrule
    prisoners dilemma & $\begin{pmatrix} -1 & -3 \\ 0 & -2 \end{pmatrix}$ & $\begin{pmatrix} -1 & 0 \\ -3 & -2\end{pmatrix}$ \\
    \midrule
    stag hunt & $\begin{pmatrix} 0 & -3 \\ -1 & -2 \end{pmatrix}$ & $\begin{pmatrix} 0 & -1 \\ -3 & -2\end{pmatrix}$ \\
    \midrule
    assurance & $\begin{pmatrix} 0 & -3 \\ -2 & -1 \end{pmatrix}$ & $\begin{pmatrix} 0 & -2 \\ -3 & -1\end{pmatrix}$ \\
    \midrule
    coordination & $\begin{pmatrix} 0 & -2 \\ -3 & -1 \end{pmatrix}$ & $\begin{pmatrix} 0 & -3 \\ -2 & -1\end{pmatrix}$ \\
    \midrule
    mixedharmony & $\begin{pmatrix} 0 & -1 \\ -3 & -2 \end{pmatrix}$ & $\begin{pmatrix} 0 & -3 \\ -1 & -2\end{pmatrix}$ \\
    \midrule
    harmony & $\begin{pmatrix} 0 & -1 \\ -2 & -3 \end{pmatrix}$ & $\begin{pmatrix} 0 & -2 \\ -1 & -3\end{pmatrix}$ \\
    \midrule
    noconflict & $\begin{pmatrix} 0 & -2 \\ -1 & -3 \end{pmatrix}$ & $\begin{pmatrix} 0 & -1 \\ -2 & -3\end{pmatrix}$ \\
    \midrule
    deadlock & $\begin{pmatrix} -2 & -3 \\ 0 & -1 \end{pmatrix}$ & $\begin{pmatrix} -2 & 0 \\ -3 & -1\end{pmatrix}$ \\
    \midrule
    prisoners delight & $\begin{pmatrix} -3 & -2 \\ 0 & -1 \end{pmatrix}$ & $\begin{pmatrix} -3 & 0 \\ -2 & -1\end{pmatrix}$ \\
    \midrule
    hero & $\begin{pmatrix} -3 & -1 \\ 0 & -2 \end{pmatrix}$ & $\begin{pmatrix} -3 & 0 \\ -1 & -2\end{pmatrix}$ \\
    \midrule
    battle & $\begin{pmatrix} -2 & -1 \\ 0 & -3 \end{pmatrix}$ & $\begin{pmatrix} -2 & 0 \\ -1 & -3\end{pmatrix}$ \\
    \midrule
    chicken & $\begin{pmatrix} -1 & -2 \\ 0 & -3 \end{pmatrix}$ & $\begin{pmatrix} -1 & 0 \\ -2 & -3\end{pmatrix}$ \\
    \midrule
    \multicolumn{3}{l}{\textbf{Single-Shot Matrix Game (Appendix~\ref{sec:ap:limitations})}} \\
    \midrule
    battle of the sexes & $\begin{pmatrix} 2 & 0 \\ 0 & 1 \end{pmatrix}$ & $\begin{pmatrix} 1 & 0 \\ 0 & 2\end{pmatrix}$ \\
    \midrule
    \multicolumn{3}{l}{\textbf{Modified Prisoner's Dilemma (Theorem~\ref{thm:divergence})}} \\
    \midrule
    prisoners dilemma modified & $\begin{pmatrix} 0 & -2 \\ -1 & -3 \end{pmatrix}$ & $\begin{pmatrix} -1 & 0 \\ -3 & -2\end{pmatrix}$ \\
    \midrule

    \end{tabular}

\end{table*}

\begin{table*}[ht] \label{tab:hyperparam}
  \centering
  \caption{Hyper-Parameter Configuration Table}
    \begin{tabular}{lr|lr}
    \toprule
    \multicolumn{4}{l}{\textbf{Follower Policy Gradient}} \\
    \midrule
    \textbf{Hyper-Parameter} & \textbf{Value} & \textbf{Hyper-Parameter} & \textbf{Value} \\
    algorithm & PG & rollout\_fragment\_length & 100 \\
    lr & 0.02 & train\_batch\_size & 100 \\
    iterations & 500 & batch\_mode & complete\_episodes \\
    \midrule
    \multicolumn{4}{l}{\textbf{Leader Policy Gradient}} \\
    \midrule
    \textbf{Hyper-Parameter} & \textbf{Value} & \textbf{Hyper-Parameter} & \textbf{Value} \\
    algorithm & PG & rollout\_fragment\_length & 100 \\
    lr & 0.156 & train\_batch\_size & 100 \\
    iterations & 1200 & batch\_mode & complete\_episodes \\
    \midrule
    \multicolumn{4}{l}{\textbf{Leader PPO}} \\
    \midrule
    \textbf{Hyper-Parameter} & \textbf{Value} & \textbf{Hyper-Parameter} & \textbf{Value} \\
    algorithm & PPO & rollout\_fragment\_length & 1000 \\
    lr & 0.008 & train\_batch\_size & 1000 \\
    entropy\_coeff & 0.0 & sgd\_minibatch\_size & 1000 \\
    iterations & 500 & batch\_mode & complete\_episodes \\
    \midrule
    \multicolumn{4}{l}{\textbf{Leader DQN}} \\
    \midrule
    \textbf{Hyper-Parameter} & \textbf{Value} & \textbf{Hyper-Parameter} & \textbf{Value} \\
    algorithm & SimpleQ & rollout\_fragment\_length & 10 \\
    lr & 0.001 & train\_batch\_size & 1024 \\
    learning\_starts & 5000 & exploration\_type & ParameterNoise \\
    exploration\_initial\_stddev & 1.0 & exploration\_random\_timesteps & 0 \\
    iterations & 20000 & batch\_mode & complete\_episodes \\
    \midrule
    \multicolumn{4}{l}{\textbf{Atari PPO}} \\
    \midrule
    \textbf{Hyper-Parameter} & \textbf{Value} & \textbf{Hyper-Parameter} & \textbf{Value} \\
    train\_batch\_size & 5000 & rollout\_fragment\_length & 100 \\
    sgd\_minibatch\_size & 100 & num\_sgd\_iter & 10 \\
    lambda & 0.95 & kl\_coeff & 0.5 \\
    clip\_param & 0.1 & vf\_clip\_param & 10.0 \\
    entropy\_coeff & 0.01 & lr & 0.001 \\
    num\_rollout\_workers & 10 & num\_envs\_per\_worker & 5 \\
    \bottomrule
    \end{tabular}
\end{table*}

\end{document}